\documentclass[
  submission,
  copyright,
]{eptcs}

\providecommand{\event}{GandALF 2026}

\title{Antichains for Concurrent Parameterized Games\thanks{This work
is partially supported by ANR BisoUS (ANR-22-CE48-0012).}}
\def\titlerunning{Antichains for Concurrent Parameterized Games}

\author{
  Nathalie Bertrand
  \institute{Universit\'e de Rennes,
  Inria,\\ CNRS, IRISA, Rennes, France}
  \email{nathalie.bertrand@inria.fr}
  \and
  Patricia Bouyer
  \institute{LMF, CNRS \& ENS Paris-Saclay,\\Universit\'e Paris-Saclay,\\
  Gif-sur-Yvette, France}
  \email{bouyer@lmf.cnrs.fr}
  \and
  Ga\"etan Staquet
  \institute{Nantes Universit\'e,\\\'Ecole Centrale Nantes,\\CNRS, LS2N,
  Nantes, France}
  \email{gaetan.staquet@ec-nantes.fr}
}
\def\authorrunning{N. Bertrand, P. Bouyer, and G. Staquet}

\usepackage{amsthm}
\usepackage{mathtools}
\usepackage{thmtools}

\theoremstyle{plain}
\newtheorem{theorem}{Theorem}[section]
\newtheorem{lemma}[theorem]{Lemma}
\newtheorem{proposition}[theorem]{Proposition}
\newtheorem{corollary}[theorem]{Corollary}

\theoremstyle{definition}
\newtheorem{definition}[theorem]{Definition}
\newtheorem{example}[theorem]{Example}

\usepackage{cleveref}
\usepackage[inline]{enumitem}

\usepackage{xparse}

\hypersetup{
  hidelinks,
}

\usepackage[T1]{fontenc}

\usepackage[dvipsnames,svgnames]{xcolor}
\usepackage{xparse}
\usepackage{xspace}
\usepackage[skins]{tcolorbox}
\usepackage{booktabs}
\usepackage{csquotes}
\usepackage{stmaryrd}
\usepackage{mathtools}
\usepackage{makecell}
\usepackage{tikz}
\usepackage{pgfplots}
\usepackage{amssymb}
\usepackage{multirow}
\usepackage{subcaption}
\usepackage[noadjust,nomove]{cite}

\usetikzlibrary{
  automata,
  arrows.meta,
  calc,
  positioning,
  shapes.geometric,
  patterns,
  plotmarks,
}
\tikzset {
  edge with arrow/.style = {
    ->,
    >=stealth,
    shorten >=1pt,
  },
  directed/.style = {
    edge with arrow,
    node distance=2cm,
    on grid,
    semithick,
    double distance=1.5pt,
  },
  game/.style = {
    directed,
    auto,
    initial text={},
    state/.append style = {
      ellipse,
      inner sep = 1pt,
      minimum size = 10pt,
    },
    eve/.style = {
      state,
      ellipse,
    },
    adam/.style = {
      state,
      rectangle,
      inner sep = 2pt,
    },
    adamStr/.style = {
      ultra thick,
    }
  },
  lattice/.style = {
    on grid,
    semithick,
    auto,
  }
}

\pgfplotsset{
  compat=1.17,
  every x tick label/.append style = {font=\scriptsize},
  every y tick label/.append style = {font=\scriptsize},
  every axis label/.append style = {font=\scriptsize},
}

\NewDocumentCommand{\hairsp}{}{\hspace{1pt}}

\NewDocumentCommand{\ie}{}{\textsl{i.\nobreak\hairsp{}e.}\xspace}
\NewDocumentCommand{\eg}{}{\textsl{e.\nobreak\hairsp{}g.}\xspace}

\RenewDocumentCommand{\implies}{}{\Rightarrow}

\NewDocumentCommand{\nat}{}{\mathbb{N}}
\NewDocumentCommand{\natnonneg}{}{{\nat_{>0}}}
\NewDocumentCommand{\natsubs}{ m }{{\nat_{#1}}}

\NewDocumentCommand{\subsets}{ m }{\mathcal{P}(#1)}

\NewDocumentCommand{\complexity}{ m }{\textsf{#1}}
\NewDocumentCommand{\PTIME}{}{\complexity{PTIME}\xspace}

\NewDocumentCommand{\PSPACE}{}{\complexity{PSPACE}\xspace}

\NewDocumentCommand{\arena}{}{\mathcal{A}}
\NewDocumentCommand{\vertices}{}{V}
\NewDocumentCommand{\alphabet}{}{\Sigma}
\NewDocumentCommand{\transitions}{}{\Delta}
\NewDocumentCommand{\En}{}{\mathrm{En}}
\NewDocumentCommand{\opponents}{m O{}}{\nabla_{#2}(#1)}
\NewDocumentCommand{\histories}{}{\mathsf{Hist}}
\NewDocumentCommand{\strat}{}{\sigma}
\NewDocumentCommand{\outcome}{}{\mathsf{Out}}
\NewDocumentCommand{\target}{}{t}
\NewDocumentCommand{\game}{}{\mathcal{G}}

\NewDocumentCommand{\kGame}{}{\mathsf{Know}_{\game}}
\NewDocumentCommand{\verticesE}{}{{\vertices_E}}
\NewDocumentCommand{\verticesEK}{}{{\vertices_E^\latticeK}}
\NewDocumentCommand{\verticesA}{}{{\vertices_A}}
\NewDocumentCommand{\kTransitions}{}{{\transitions_{\mathsf{Know}}}}
\NewDocumentCommand{\final}{}{F}
\NewDocumentCommand{\kStrat}{}{\lambda}
\NewDocumentCommand{\tree}{}{\mathcal{T}}
\NewDocumentCommand{\Win}{}{\textsf{Win}\xspace}
\NewDocumentCommand{\Lose}{}{\textsf{Lose}\xspace}

\NewDocumentCommand{\parOrder}{}{\sqsubseteq}
\NewDocumentCommand{\parOrdersim}{}{\sqsubseteq_{\mathrm{sim}}}
\NewDocumentCommand{\Dom}{}{\mathsf{Dom}}
\NewDocumentCommand{\reduce}{ m }{\left\lceil #1 \right\rceil}
\NewDocumentCommand{\expand}{ m }{{#1\!\!\downarrow}}
\NewDocumentCommand{\lub}{}{\sqcup}

\NewDocumentCommand{\glb}{}{\sqcap}

\NewDocumentCommand{\W}{}{\mathsf{W}}
\NewDocumentCommand{\WAlt}{}{\mathsf{W}_{\mathrm{Alt}}}
\NewDocumentCommand{\KPred}{m m O{}}{\mathsf{KPred}_{#3}[#1](#2)}
\NewDocumentCommand{\KPredAlt}{m m
O{}}{\mathsf{KPred}^{\mathrm{Alt}}_{#3}[#1](#2)}
\NewDocumentCommand{\Pred}{m O{}}{\mathsf{Pred}_{#2}(#1)}
\NewDocumentCommand{\PredAlt}{m O{}}{\mathsf{Pred}^{\mathrm{Alt}}_{#2}(#1)}
\NewDocumentCommand{\latticeK}{}{\mathcal{K}}

\NewDocumentCommand{\bigsetminus}{}{\mathbin{\big\backslash}}

\AtEndEnvironment{example}{\hfill$\lrcorner$}

\begin{document}
\maketitle

\begin{abstract}
  Concurrent parameterized games involve a fixed yet arbitrary
  number of players.
  They are described by finite arenas in which
  the edges are labeled with languages that describe the possible
  move combinations leading from one vertex to another ($n$ players
  yield a word of length $n$).
  Previous work showed that, when edge labels are regular languages,
  one can decide whether a distinguished
  player, called Eve, has a strategy to ensure a reachability
  objective, against any strategy profile of her arbitrarily many
  opponents. This decision problem is known to be
  \PSPACE-complete. A basic ingredient in the
  \PSPACE-membership proof is
  the reduction to the exponential-size \emph{knowledge game}, a
  2-player game that reflects the knowledge Eve has on the number of
  opponents.

  In this paper, we provide a symbolic approach, based on
  antichains, to compute Eve's winning region in the knowledge
  game. In words, it gives the minimal knowledge Eve needs at every
  vertex to win the concurrent parameterized reachability game. More
  precisely, we propose two fixed-point algorithms that compute, as
  an antichain, the maximal elements of the winning region for Eve
  in the knowledge game. We implemented these two algorithms in C++,
  as well as the one initially proposed, and report on their
  relative performances on various benchmarks.
\end{abstract}

\section{Introduction}\label{sec:intro}

\paragraph{Parameterized verification.} The verification of models
with unknown parameters is known as parameterized
verification. Special cases are systems formed of an arbitrary number
of agents. Motivations for such models are varied: they range from
distributed algorithms and communication protocols to chemical or
biological systems. In all cases, these parameterized systems are
meant to be correct independently of the number of agents.  Verifying
parameterized systems is challenging, since it aims at validating
infinitely many instances at once, independently of the parameterized
number of agents.

Apt and Kozen proved the first undecidability result for the
verification of parameterized systems~\cite{AK-ipl86}. In their
reduction, the system consists of a single agent, whose description
depends on parameter $n$. Oftentimes, when agents bear identities, the
parameterized verification of simple properties is
undecidable. In contrast, in their pioneering work on distributed
models with many
identical entities, German and Sistla represented the behavior of a
network by finite state machines interacting via rendezvous
communications, and provided decision algorithms for their
parameterized verification~\cite{GS-jacm92}. Since then, variants of
models have been proposed to handle various communication
means. Remarkably, subtle changes, such as the presence or absence of
a controller in the system, can drastically alter the complexity of
the verification problems, as surveyed in~\cite{Esparza-stacs14}.

\paragraph{Concurrent parameterized games.}
Concurrent parameterized games~\cite{BertrandBM19} are one of the few
examples of game frameworks in the parameterized verification
landscape. They generalize concurrent 2-player games to an arbitrary number of
players. More precisely, the arena of a concurrent parameterized game
represents infinitely many games, each played with a fixed number of
players. A distinguished player, Eve, aims at achieving her objective
not knowing \emph{a priori} how many opponents she has. Solving such
games amounts to determining whether Eve has a winning strategy whatever
the number of her opponents and their respective strategies. When Eve
wants to achieve a reachability objective, deciding whether she has a
winning strategy, independently of the number of her opponents, is a
\PSPACE-complete problem~\cite{BertrandBM19}. Since Eve's opponents
can collude, the resolution of concurrent parameterized games with
reachability objectives reduces to the resolution of a two-player
game, called the \emph{knowledge game}, where vertices not only store
the original game vertex but also the \emph{knowledge} Eve has on the number
of her opponents. Her adversary in that 2-player game represents the
coalition of her arbitrarily many opponents in the parameterized
game. In the worst-case, the set of possible knowledge --- and hence the
size of the knowledge game --- can be exponential in the size of the
concurrent parameterized game. The DFS algorithm proposed
in~\cite{BertrandBM19} performs an on-the-fly DFS exploration
of the knowledge game.

\paragraph{Contributions.}
We exploit a monotonicity property of the knowledge
game: if Eve wins from a given knowledge set, she also wins from any
finer knowledge set. Building on this observation and the natural
partial order on sets of natural numbers (representing possible number of
opponents), we propose two new algorithms for checking the existence
of a winning strategy for Eve, based on antichains.
A full version of this paper, containing all proofs, appears
at~\cite{BertrandBS25}.

We first characterize the winning positions for Eve in the knowledge
game as the least fixed-point of a monotone function on the lattice of
antichains on knowledges. We turn this non-effective fixed-point
characterization into two algorithms: the first one exploits that only
finitely many knowledge sets are relevant, making it possible to
compute in a finite lattice; the second one takes an operational
viewpoint to compute iteratively the winning region. A non-trivial
contribution consists in proving that these two algorithms indeed
compute the winning region.  Importantly, our new
algorithms do more than the aforementioned DFS algorithm, as the
latter decides whether Eve has a winning strategy from a fixed
initial vertex, while we determine every position from which Eve can
win.

We implemented our two new algorithms and the DFS one proposed
in~\cite{BertrandBM19} in C++ and benchmarked their performances on
several instances of concurrent parameterized games. All instances are
scalable, to demonstrate how the three algorithms scale when instances
grow large (both in number of vertices and relevant knowledge sets).
Inspired by the \PSPACE-hardness proof of~\cite{BertrandBM19}, part of
our benchmarks are concurrent parameterized arenas derived from
quantified boolean formulas (QBF), with up to 7 variables, and a
number of clauses between 50 and 100.
Our benchmarks cover instances of concurrent parameterized games with
up to 126 vertices and hundreds of edges, resulting in a lattice of
2048 to 32768 knowledge sets (per vertex).  We were able to
successfully check the existence or the absence of a winning strategy
for Eve on instances with up to thousands of knowledge vertices in
less than 5 minutes. The experiments show that on instances
of synthetic arenas that are winning for Eve, the running times are
favorable to the
DFS-algorithm. This competitive drawback of our algorithms must be set
against the fact that our algorithms not only decide whether Eve has a
winning strategy, but also compute her winning region: they provide
for every vertex
the minimal information she needs to win.
Conversely, on QBF instances, the algorithm from~\cite{BertrandBM19}
times out for much smaller instances than our antichain
algorithms. Also, there are many negative synthetic instances of
parameterized games, on which the antichain algorithms outperform the
DFS one. Finally, among antichain algorithms, the experiments
highlight the significantly better performance of the one that avoids
the construction of the lattice.

\paragraph{Related work.} Fixed-point computations on lattices were
successfully used in the
verification community to efficiently solve computationally hard
problems that are not \PTIME-tractable: resolution of games of
imperfect information~\cite{DDR-hscc06}, universality of finite
automata~\cite{DDHR-cav06,DR-tacas10}, language inclusion for various
types of automata~\cite{DR-lmcs09,HIRV-fmsd20,DGH-tacas23},
an automata-based approach for LTL
model-checking~\cite{DDMR-tacas08,DR-lmcs09}, LTL
realizability~\cite{FJR-cav09,FJR-fmsd11,CP-tacas23}, QBF
satisfiability~\cite{BBDDR-atva11}, or refinement
checking~\cite{WSSLDWL-icfem12,LGW-lmcs21}.  An antichain is merely a
set of incomparable
elements for a given partial order.  The antichain approach always
exploits some monotonicity of the problem and proposes a fixed-point
(semi-)algorithm iterating a monotone operator over the lattice of
antichains. In these works, the greatest or least fixed-point of the
operator exactly contains the relevant information to solve the
problem. Generally speaking the antichain approach compared favorably,
in terms of experimental complexity, to other approaches from the
literature. The work closest to ours in that series of contributions
is the earliest one, on games of imperfect information. Not only is it
the only one concerned with games, but also, similarly to our setting,
players only have partial information. The two settings however differ
greatly in that in concurrent parameterized games, the information Eve
has is very specific and concerns the number of opponents she has,
whereas in games of imperfect information, the uncertainty is on the
current game vertex.

\section{Concurrent Parameterized Games}\label{sec:setting}

Concurrent parameterized games, introduced in~\cite{BertrandBM19}, are
described by parameterized arenas. In such an arena, edges are labeled
by pairs $(a,k)$, where $a$ is an action of Eve and $k$ a number of
opponents\footnote{Our definition of parameterized arenas specifies
  the number of opponents, and does not reflect their action choices.
  While parameterized arenas were initially defined with edges labeled
  with regular languages, for the problem we are interested in, one
  can focus on the number of opponents without loss of generality
(see~\cite[Section 4]{BertrandBM19}).}. The number of opponents is
initially unknown to Eve, but she may refine her knowledge about $k$
while playing, based on the alternating sequence of actions and
visited vertices. The objective of Eve is to ensure a reachability
objective independently of $k$.

We write $\natnonneg$ for the set of positive integers
and more generally $\natsubs{\gamma}$ for the subset of positive
integers satisfying a constraint $\gamma$, \eg,
$\natsubs{>2}$ for the set of integers larger than $2$.

\begin{definition}[Parameterized arena]\label{def:arena}
  A \emph{parameterized arena} is a tuple
  \(\arena = (\vertices, \alphabet, \transitions)\)
  where
  \(\vertices\) is a finite set of vertices,
  \(\alphabet\) is a finite set of actions, and
  \(\transitions : \vertices \times \alphabet \times \natnonneg \to
  \subsets{\vertices}\) is the transition function.
\end{definition}

The arena is \emph{deterministic} if for every \(v \in \vertices\),
\(a \in \alphabet\), and \(k \in \natnonneg\), there is at most one
vertex \(v' \in \vertices\) such that
\(v' \in \transitions(v, a, k)\).  An action \(a \in \alphabet\) is
\emph{enabled} at vertex \(v\) if there exists \(k \in \natnonneg\)
such that \(\transitions(v, a, k) \neq \emptyset\).  We write
\(\En(v)\) for the set of enabled actions at \(v\).  The arena is
assumed to be \emph{complete for enabled actions}: for every
\(v \in \vertices\), if \(a\) is enabled at \(v\), then for all
\(k \in \natnonneg\), \(\transitions(v, a, k) \neq \emptyset\). This
assumption is natural: Eve does not know how many opponents she has,
and the successor vertex must exist whatever that number is.

For any \(v, v' \in \vertices\) and \(a \in \alphabet\), we introduce
the following notation to represent the set of numbers of opponents that can
lead from \(v\) to \(v'\) under action \(a\) of Eve:
\(\opponents{v, a, v'} = \{k \in \natnonneg \mid v' \in
\transitions(v, a, k)\}\).

For every $v\in \vertices$, for every $a \in \En(v)$ and every
$V' \subseteq V$, we define:
\begin{equation}\label{nablaI}
  \opponents{v,a,V'} = \bigcap_{v' \in V'} \opponents{v,a,v'}
  \bigsetminus
  \bigcup_{v' \notin V'} \opponents{v,a,v'}.
\end{equation}
This corresponds exactly to the numbers of opponents that allow to
reach vertices in $V'$ and forbid to reach vertices outside $V'$.
Note that, on non-deterministic arenas, we may have
\(\opponents{v, a, \{v'\}} \subsetneq \opponents{v, a, v'}\).
As a consequence of the completeness assumption, for every
$v \in \vertices$ and $a \in \En(v)$, for every set
$K \subseteq \natnonneg$:
\begin{equation}\label{K:complete}
  K = \bigcup_{V' \subseteq V} K \cap \opponents{v,a,V'}.
\end{equation}

Although \Cref{def:arena} is quite general and permits
infinitely many outgoing transitions (one for each natural number,
for example), in the rest of the paper, we impose restrictions on
the arenas. Specifically, we assume that all sets
\(\opponents{v, a, v'}\) and \(\opponents{v, a, V'}\) are finite
unions of intervals, as in our implementation
(see~\Cref{sec:experiments}).  However, the developments made
throughout this paper also apply to any family of sets with effective
intersection, union and set difference operations,
provided it yields a finite representation of the
arena, such as semilinear sets for instance.

\begin{figure}[t]
  \centering
  \begin{tikzpicture}[
      game,
      node distance = 20pt and 70pt,
    ]
    \node [state]                               (v)   {\(v\)};
    \node [state, above right=of v]             (x1)  {\(x_1\)};
    \node [state, below right=of v]             (x2)  {\(x_2\)};
    \node [state, right=of x1]                  (y1)  {\(y_1\)};
    \node [state, right=of x2]                  (y2)  {\(y_2\)};
    \node [state, accepting, below right=of y1] (t)   {\(t\)};
    \node [state, left=of v]              (bad) {\(s\)};

    \path
    (v) edge node [sloped, pos=0.4, yshift=-3pt]  {\(a, \natnonneg\)}
    (x1)
    edge node [sloped, pos=0.6, yshift=-3pt]  {\(a, \natnonneg\)}
    (x2)
    edge node [sloped]                        {\(c,  \natsubs{\leq
    2}\)}     (bad)
    edge node                                 {\(c, \natsubs{> 2}\)}         (t)
    (x1)  edge node                                 {\(b, \{1\}\)}
    (y1)
    (x2)  edge node [']                             {\(b, \{2\}\)}
    (y2)
    (y1)  edge node                                 {\(b,
    \natnonneg\)}            (t)
    (y2)  edge node [']                             {\(b,
    \natnonneg\)}            (t)
    ;

    \draw [rounded corners = 10pt]
    let
    \p{s} = (x1.west),
    \p{t} = (v.north),
    in
    (\p{s}) -- (\x{t}, \y{s})
    node [above, pos=0.4] {\(b,  \natsubs{\neq 1}\)}
    -- (\p{t})
    ;
    \draw [rounded corners = 10pt]
    let
    \p{s} = (x2.west),
    \p{t} = (v.south),
    in
    (\p{s}) -- (\x{t}, \y{s})
    node [below, pos=0.4] {\(b,  \natsubs{\neq 2}\)}
    -- (\p{t})
    ;
  \end{tikzpicture}
  \caption{An example of a non-deterministic parameterized
  arena.}\label{fig:arena:nondeterministic}
\end{figure}
\begin{example}
  Let \(\arena\) be the parameterized arena of
  \Cref{fig:arena:nondeterministic}.  In our figures, we use
  constraints to represent the transition function: for instance, the
  label \enquote{\(c, \natsubs{\leq 2}\)} on the transition from \(v\)
  to \(s\) represents
  \(\transitions(v, c, 1) = \transitions(v, c, 2) = \{s\}\).

  The label \enquote{\(a, \natnonneg\)} from \(v\) to \(x_1\) denotes
  that \(\opponents{v, a, x_1} = \natnonneg\), while the absence of
  edges from \(v\) to \(y_1\) represents
  \(\opponents{v, a, y_1} = \emptyset\).  Further, as
  \(\opponents{v, a, x_2} = \natnonneg\) and, for any
  \(v' \in \{s, y_1, y_2, t\}\), \(\opponents{v, a, v'} = \emptyset\),
  we have \(\opponents{v, a, \{x_1, x_2\}} = \natnonneg\).  Finally,
  since \(\opponents{x_1, b, v} =  \natsubs{\neq 1}\) and
  \(\opponents{x_1, b, y_1} = \{1\}\), it holds that
  \(\opponents{x_1, b, \{v, y_1\}} = \emptyset\).
  Note that this arena
  is complete for enabled actions.
\end{example}

Let \(k \in \natnonneg\).
A \emph{\(k\)-history}, for a coalition composed of \(k\) opponents of Eve, is
a finite sequence
\(v_0 a_0 \dotsb v_i \in {(\vertices \cdot \alphabet)}^* \cdot \vertices\)
such that for every \(j < i, v_{j+1} \in \transitions(v_j, a_j, k)\).
A \emph{history} in \(\arena\) is a \(k\)-history for some \(k \in \natnonneg\).
We write \(\histories(k)\) (resp.\ \(\histories\)) for the set of
\(k\)-histories
(resp.\ histories) in \(\arena\).
Similar notions of a \emph{\(k\)-play} and a \emph{play} are defined for
infinite sequences.

\begin{definition}[Strategy]
  A \emph{strategy} for Eve in \(\arena\) is a mapping
  \(\strat : \histories \to \alphabet\) that associates to every
  history \(hv' \in \histories\) an action \(\strat(hv') \in \En(v')\).
\end{definition}

A strategy for Eve is applied with no prior information on the number
of opponents. The resulting plays however may highly
depend on it. Given a strategy \(\strat\), an initial vertex \(v\)
and \(k \in \natnonneg\) a number of opponents, we define the outcome
\(\outcome(\strat, v, k)\) as the set of plays that \(\strat\) induces
from \(v\) when Eve has exactly \(k\) opponents:
\(\outcome(\strat, v, k)\) is the set of all \(k\)-plays
\(\rho = v_0 a_0 v_1 a_1 v_2 \dotsb\) such that \(v = v_0\), and for
all \(i \geq 0\), \(\strat(v_0 a_0 \dotsb v_i) = a_i\) and
\(v_{i+1} \in \transitions(v_i, a_i, k)\).  The completeness
assumption ensures that the set \(\outcome(\strat, v, k)\) is not
empty.  Finally, \(\outcome(\strat, v)\) is the set of all possible
plays induced by \(\strat\) from \(v\):
\(\outcome(\strat, v) = \bigcup_{k \geq 1} \outcome(\strat, v, k)\).

Given a parameterized arena
\(\arena = (\vertices, \alphabet, \transitions)\), a target vertex
\(\target \in \vertices\) defines a \emph{reachability game}
\(\game = (\arena, \target)\) for Eve.  A strategy \(\strat\) for Eve
from \(v\) in $\game$ is
\emph{winning} if all plays in \(\outcome(\strat, v)\) eventually
reach \(\target\).  In this case,
\(v\) belongs to the \emph{winning region} of Eve.
On the arena of \Cref{fig:arena:nondeterministic}, in which the target
vertex is highlighted by a double circle, the winning region of Eve is
$\{t, y_1,y_2\}$, as argued in \Cref{ex:knowledgeGame} below.

For constraints $\nabla(v,a,v')$ that are finite unions of intervals or
semilinear sets, the paper~\cite{BertrandBM19} studies the existence
of a winning strategy for Eve, and shows:
\begin{theorem}[\cite{BertrandBM19}]
  Deciding the existence of a winning strategy for Eve is \PSPACE-complete.
\end{theorem}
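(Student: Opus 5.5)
Both directions of the theorem go through the \emph{knowledge game}, in which Eve's vertices are pairs $(v,K)$ with $K\subseteq\natnonneg$ the set of opponent numbers still consistent with the history. From $(v,K)$, Eve picks $a\in\En(v)$. The coalition then picks $V'\subseteq\vertices$ with $K\cap\opponents{v,a,V'}\neq\emptyset$ and a successor $v'\in V'$, and the new knowledge is $K\cap\opponents{v,a,v'}$. Eve wins if $\target$ is reached.

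\textbf{Correctness of the reduction.} I first show that Eve wins $\game$ from $v$ iff she wins the knowledge game from $(v,\natnonneg)$. The key observation is that the knowledge reached after a history $h$ is exactly $\{k \mid h\in\histories(k)\}$, so a history-based strategy in $\arena$ and a strategy in the knowledge game translate into each other. For $k$ in the final knowledge set, a $k$-play avoiding $\target$ is a losing play in the knowledge game, and conversely. Since the knowledge game is a finite-branching turn-based reachability game, it is determined and positional strategies suffice. Determinacy needs finitely many reachable knowledge sets, which I prove next.

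\textbf{Finiteness and upper bound.} Every reachable knowledge set is $\natnonneg$ intersected with finitely many sets $\opponents{v,a,v'}$. Since these are finite unions of intervals (or semilinear sets), every such intersection is a Boolean combination of the atoms generated by the finitely many constraint sets. Hence at most exponentially many knowledge sets exist, each of polynomial description. For membership in \PSPACE, I run an alternating search on the knowledge game. Take $N$ to be the exponential bound on the number of knowledge vertices. If Eve can win at all, she has a positional strategy, so she wins within $N$ steps. An alternating machine can therefore guess Eve's moves and universally branch over the coalition's moves with a step counter, storing only the current vertex $(v,K)$ in polynomial space. This gives APTIME $=$ \PSPACE. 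Alternatively, a recursive DFS with depth bounded by $N$ along each branch also uses polynomial space, since the current branch is never longer than the number of distinct knowledge vertices if we forbid repetitions along it.

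\textbf{Hardness.} I reduce from QBF, encoding a valuation of each variable into the number of opponents, e.g. via residues modulo distinct primes or via bits of $k$ through interval constraints. In the quantifier prefix:
\begin{itemize}
\item a universal variable is a vertex where the opponents' number selects the branch $x$ or $\neg x$ (e.g. $\{k : \text{bit } i \text{ of } k \text{ is } 1\}$), and Eve learns that bit;
\item an existential variable is an Eve choice, committed afterwards by checking consistency with $k$.
\end{itemize}
Finally, the coalition picks a clause, Eve picks a literal in it, and the arena checks whether that literal is true for $k$ and Eve's commitments. Reaching $\target$ must correspond to satisfaction.

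The main obstacle is making this encoding polynomial-size with interval constraints only. Bits of $k$ are not finite unions of intervals, so I would instead let $k$ range over $[1,2^n]$ split adaptively: at universal step $i$ the current knowledge is an interval and the transition halves it. Existential choices must be recorded in the vertex, which blows up the arena, so I would rather have Eve's choices also handled by "learning" tricks. Since this is an existing result from \cite{BertrandBM19}, I would ultimately rely on their hardness construction.
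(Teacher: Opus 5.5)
Your overall plan (knowledge game for membership, QBF for hardness) is the right one, but the membership step has a genuine gap. Your step counter runs up to $N$, the number of knowledge vertices, which you only bound exponentially. An alternating machine that works in polynomial space but may run for exponentially many steps is an \complexity{APSPACE} machine, and \complexity{APSPACE} $=$ \complexity{EXPTIME}. The equality \complexity{APTIME} $=$ \PSPACE needs polynomial \emph{time}.

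Your DFS variant fails for the same reason: a repetition-free branch can still have length $N$, so the recursion stack is exponential. As written you only get \complexity{EXPTIME}, which solving the exponential knowledge game explicitly already gives.

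The missing ingredient is a \emph{polynomial} bound on the length of plays, which comes from monotonicity of knowledge:
\begin{itemize}
\item After $v_0a_0\cdots v_n$ the knowledge is $\natnonneg\cap\bigcap_{j<n}\opponents{v_j,a_j,v_{j+1}}$. So it only shrinks, and each strict shrink needs a constraint set $\opponents{v,a,v'}$ not used before. Along any play the knowledge therefore changes at most $M\le|\vertices|^2|\alphabet|$ times.
\item A positional winning strategy never revisits an Eve vertex before reaching $\target$, since otherwise Adam repeats the cycle forever.
\item For each knowledge set there are only $|\vertices|$ Eve vertices.
\end{itemize}
Hence Eve, if she wins, wins within $(M+1)|\vertices|$ rounds. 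With that counter your alternating algorithm runs in polynomial time. This is exactly what the DFS of \cite{BertrandBM19} recalled in the paper exploits. Its tree of constant-knowledge sub-games is exponentially large but only polynomially deep, and each sub-game is solved in polynomial time by an attractor computation.

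Hardness is not actually established: you end up deferring to \cite{BertrandBM19}. The obstacle you hit is self-inflicted. Universal choices need not be encoded in $k$, because the coalition resolves non-determinism. Existential choices need not be stored in vertices either. It suffices to use $k$ as a hidden index of a literal to audit:
\begin{itemize}
\item Give each literal $\ell$ a code $c(\ell)\in\{1,\dots,2n\}$.
\item At variable $x_i$, the coalition (if $x_i$ is universal) or Eve (if existential) moves to vertex $x_i$ or $\neg x_i$.
\item From a literal vertex $\ell$, go to $\target$ if $k=c(\ell)$, and to the next variable otherwise.
\item Finally the coalition picks a clause $C$ and Eve picks $\ell\in C$. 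Go to a losing sink if $k=c(\ell)$, and to $\target$ otherwise.
\end{itemize}
If $\ell$ is true, the play for $k=c(\ell)$ has already exited to $\target$. If $\ell$ is false, the play for $k=c(\ell)$ is indistinguishable from the generic one up to the clause, and it ends in the sink. So Eve is forced to stay consistent with her existential choices.

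Only singletons and their complements are needed, so interval constraints suffice and the arena is polynomial. This also matches the paper's remark that, for QBF instances, $|\latticeK|$ depends only on the number of variables: $2^{2n+1}$ gives $8192$ and $32768$ for $6$ and $7$ variables.
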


The purpose of the current paper is to use lattices and antichains to
compute the winning region of Eve.
We first recap some results of~\cite{BertrandBM19}.

\paragraph{Knowledge game.}

The DFS algorithm of~\cite{BertrandBM19} first turns a
parameterized reachability game into a standard two-player turn-based
game~\cite[Chapter 2]{GamesOnGraphs}.

\begin{definition}[Knowledge game~\cite{BertrandBM19}]\label{def:knowledge}
  Let \(\game = ((\vertices, \alphabet, \transitions), \target)\) be a
  concurrent parameterized reachability game. The \emph{knowledge
  game} associated with \(\game\) is the two-player turn-based
  reachability game
  \(\kGame = (\verticesE \cup \verticesA, \kTransitions, \final)\)
  between Eve and Adam, such that
  \(\verticesE = \vertices \times \subsets{\natnonneg}\) and
  \(\verticesA = \verticesE \times \alphabet\) are Eve and Adam
  vertices, respectively;
  \(\final = \{(\target, K) \mid K \subseteq \natnonneg\}\) is the set
  of target vertices; and
  \(\kTransitions \subseteq (\verticesE \times \verticesA) \cup
  (\verticesA \times \verticesE)\) is the edge relation defined as follows:
  \begin{itemize}
    \item
      for all \((v, K) \in \verticesE\) and \(a \in \alphabet\)
      enabled at \(v\),
      \(((v, K), (v, K, a)) \in \kTransitions\);
    \item
      for all \((v, K, a) \in \verticesA\) and \(v' \in \vertices\),
      \(((v, K, a), (v', K \cap \opponents{v, a, v'})) \in \kTransitions\)
      whenever \(K \cap \opponents{v, a, v'} \neq \emptyset\).
  \end{itemize}
\end{definition}

Intuitively, each vertex \((v, K)\) owned by Eve records
her current knowledge on the number of her opponents.  We call the
set \(K\) a \emph{knowledge set}.  When she selects a successor
\((v, K, a)\), Adam picks a vertex
\((v', K \cap \opponents{v, a, v'})\) and the knowledge of Eve is updated:
if she learns that the number of opponents cannot be
some \(k\), she retains this information forever, and can use it when
selecting a later edge.
Notice that the reachable part
of \(\kGame\) from any \((v, \natnonneg)\) is finite: one can show that
for every vertex \((v, K)\) that Eve can reach, \(K\) is the
intersection of finitely many sets of the form
\(\opponents{v', a, v''}\), or \(\natnonneg\).

A \emph{strategy} for Eve in \(\kGame\) is a mapping
\(\kStrat : {(\verticesE \cdot \verticesA)}^* \cdot \verticesE \to \verticesA\)
compatible with \(\kTransitions\).
We borrow standard notions of outcomes and winning strategies from
the literature.

\begin{theorem}[\cite{BertrandBM19}]\label{thm:knowledge:strategy}
  Eve has a winning strategy
  from \(v\) in \(\game\) if and only if she has a winning
  strategy from \((v, \natnonneg)\) in \(\kGame\).
\end{theorem}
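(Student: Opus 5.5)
The plan is to prove both directions by translating strategies through a correspondence between histories of $\game$ and histories of $\kGame$. Given a history $h = v_0 a_0 v_1 \dotsb v_i$ of $\game$ that is a $k$-history for some $k$, define its \emph{knowledge} $K(h) = \natnonneg \cap \bigcap_{j<i} \opponents{v_j, a_j, v_{j+1}}$. This is exactly the set of $k$ for which $h$ is a $k$-history, and it is non-empty since $h$ is a history. The associated knowledge-game history is $\hat h = (v_0,K_0)(v_0,K_0,a_0)(v_1,K_1)\dotsb(v_i,K_i)$, where $K_j = K(v_0a_0\dotsb v_j)$. By induction on $j$, using the definition of $\kTransitions$, $\hat h$ is a valid history of $\kGame$ from $(v_0,\natnonneg)$. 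Conversely, every history of $\kGame$ from $(v_0,\natnonneg)$ is of the form $\hat h$ for a unique history $h$ of $\game$: the knowledge components are determined by the projection, and the non-emptiness condition on edges of $\kGame$ is exactly the requirement that the projection is a $k$-history for some $k$. So $h \mapsto \hat h$ is a bijection that preserves the sequence of visited vertices.

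\textbf{From $\game$ to $\kGame$.} Let $\strat$ be winning from $v$. Define $\kStrat(\hat h) = (v_i, K_i, \strat(h))$; this is a legal move because $\strat(h) \in \En(v_i)$. Any play of $\kStrat$ from $(v,\natnonneg)$ is a limit of histories $\hat h_i$ with $h_i$ consistent with $\strat$. The knowledge sets $K_i$ form a decreasing sequence of non-empty sets. I need a single $k$ that lies in all of them, so that the projected play is in $\outcome(\strat,v,k)$. Then the winning property of $\strat$ gives a visit to $\target$, hence a visit to $\final$. This single $k$ is the main obstacle, since a decreasing chain of non-empty subsets of $\natnonneg$ can have empty intersection. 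To avoid the issue, I would argue by contradiction using finite prefixes. If the $\kStrat$-play never reaches $\final$, then each $h_i$ avoids $\target$ and is a $k_i$-history for some $k_i$. Consider the tree of $\strat$-consistent histories avoiding $\target$. For each fixed $k$, the $k$-histories in this tree form a finitely branching subtree, and by the winning property it has no infinite branch. By König's lemma it is therefore finite, so there is a bound $N_k$ on the length of any $\target$-avoiding $\strat$-consistent $k$-history. This still fails to yield one uniform bound over all $k$. The finiteness assumption on the constraints (finite unions of intervals) has to enter here. Since $\kGame$ reachable from $(v,\natnonneg)$ is finite, the infinite play $\hat\rho$ eventually stabilises its knowledge to some non-empty $K_\infty$ (the sequence is decreasing in a finite set). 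Pick any $k \in K_\infty$. Then every prefix $h_i$ is a $k$-history, so the projection $\rho$ is in $\outcome(\strat,v,k)$ and avoids $\target$, contradicting that $\strat$ is winning. This finiteness of the reachable knowledge sets, noted just after \Cref{def:knowledge}, is the key step.

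\textbf{From $\kGame$ to $\game$.} Let $\kStrat$ be winning from $(v,\natnonneg)$. Define $\strat(h)$ to be the action component of $\kStrat(\hat h)$. This is well defined by the bijection and enabled at the last vertex, and on histories not starting at $v$ it can be chosen arbitrarily among enabled actions. Take a play $\rho \in \outcome(\strat,v,k)$. Every prefix $h$ of $\rho$ is a $k$-history, so $k \in K(h)$, and the lifts $\hat h$ form an infinite play of $\kGame$ consistent with $\kStrat$. This play reaches some $(\target,K)$, so $\rho$ visits $\target$. No compactness issue arises in this direction, so it is routine.
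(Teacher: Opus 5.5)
Your proof is correct. The paper does not prove this statement; it imports it from \cite{BertrandBM19}, so there is no in-paper argument to compare routes with. Your transfer of strategies through the bijection $h \mapsto \hat h$, between histories of $\game$ from $v$ and histories of $\kGame$ from $(v,\natnonneg)$, is the natural argument. You also correctly isolate and resolve its one delicate point: extracting a single number of opponents $k$ from an infinite play of $\kStrat$, by letting the decreasing knowledge sets stabilise at a non-empty $K_\infty$.

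One correction to your commentary: the finite-union-of-intervals assumption plays no role here. Every knowledge set along a play from $(v,\natnonneg)$ is $\natnonneg$ intersected with a subfamily of $\{\opponents{v',a,v''} \mid v',v'' \in \vertices, a \in \alphabet\}$. Because $\vertices$ and $\alphabet$ are finite, this family has at most $|\vertices|^2|\alphabet|$ members. Hence the chain $K_0 \supseteq K_1 \supseteq \dotsb$ takes at most $2^{|\vertices|^2|\alphabet|}$ values and stabilises for an arbitrary transition function $\transitions$. This is exactly the finiteness observation the paper makes right after defining $\kGame$. The K\"onig's lemma digression is therefore unnecessary and can be deleted.
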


\begin{figure}[t]
  \centering
  \begin{tikzpicture}[
      game,
      node distance = 27pt and 70pt,
    ]
    \node [eve]                             (v N)       {\(v, \natnonneg\)};
    \node [adam, below=of v N]              (v N c)     {\(v, \natnonneg, c\)};
    \node [adam, right=of v N]              (v N a)     {\(v, \natnonneg, a\)};
    \node [eve, below=of v N c]              (s leq 2)  {\(s,
    \natsubs{\leq 2}\)};
    \node [eve, accepting, right=of v N c]  (t gt 2)    {\(t,  \natsubs{> 2}\)};
    \node [eve, right=of v N a]             (x1 N)      {\(x_1, \natnonneg\)};
    \node [eve, below=of x1 N]              (x2 N)      {\(x_2, \natnonneg\)};
    \node [adam, right=of x1 N]             (x1 N b)    {\(x_1,
    \natnonneg, b\)};
    \node [adam, right=of x2 N]             (x2 N b)    {\(x_2,
    \natnonneg, b\)};
    \node [eve, above=of x1 N]              (v neq 1)   {\(v,
    \natsubs{\neq 1}\)};
    \node [eve, below=of x2 N]              (v neq 2)   {\(v,
    \natsubs{\neq 2}\)};
    \node [right=of v neq 2]                            {\ldots};
    \node [eve, right=of x1 N b]            (y1 1)      {\(y_1, \{1\}\)};
    \node [eve, right=of x2 N b]            (y2 2)      {\(y_2, \{2\}\)};
    \node [adam, right=of y1 1]             (y1 1 b)    {\(y_1, \{1\}, b\)};
    \node [adam, right=of y2 2]             (y2 2 b)    {\(y_2, \{2\}, b\)};
    \node [eve, accepting, above=of y1 1 b] (t 1)       {\(t, \{1\}\)};
    \node [eve, accepting, below=of y2 2 b] (t 2)       {\(t, \{2\}\)};
    \node [adam, right=of v neq 1]          (v neq 1 c) {\(v,
    \natsubs{\neq 1}, c\)};
    \node [right=of v neq 1 c]                          {\ldots};
    \node [adam, above=of v neq 1]          (v neq 1 a) {\(v,
    \natsubs{\neq 1}, a\)};
    \node [eve, left=of v neq 1 a]          (x1 neq 1)  {\(x_1,
    \natsubs{\neq 1}\)};
    \node [eve, right=of v neq 1 a]         (x2 neq 1)  {\(x_2,
    \natsubs{\neq 1}\)};
    \node [right=of x2 neq 1]                           {\ldots};
    \node [adam, left=of x1 neq 1]          (x1 neq 1 a){\(x_1,
    \natsubs{\neq 1}, b\)};

    \foreach \s/\t in {v N/v N a,v N/v N c,
      v N c/t gt 2,
      v N a/x2 N,
      x1 N/x1 N b,
      x2 N/x2 N b,
      x1 N b/y1 1,
      x2 N b/y2 2,
      y1 1/y1 1 b,
      y2 2/y2 2 b,
      v neq 1/v neq 1 c,v neq 1/v neq 1 a,
      v neq 1 a/x2 neq 1,
    x1 neq 1/x1 neq 1 a} {
      \path (\s) edge (\t);
    }

    \foreach \s/\t in {v N a/x1 N,
      x1 N b/v neq 1,
      v neq 1 a/x1 neq 1,
      x2 N b/v neq 2,
      y1 1 b/t 1,
      y2 2 b/t 2,
    v N c/s leq 2} {
      \path (\s) edge [adamStr] (\t);
    }

    \draw [rounded corners = 10pt, adamStr]
    let
    \p{s} = (x1 neq 1 a.south),
    \p{t} = (v neq 1.west),
    in
    (\p{s}) -- (\x{s}, \y{t})
    -- (\p{t})
    ;
  \end{tikzpicture}
  \caption{Fragment of the knowledge game from \(v\) for the arena of
  \Cref{fig:arena:nondeterministic}.}\label{fig:knowledgeGame}
\end{figure}
\begin{example}\label{ex:knowledgeGame}
  \Cref{fig:knowledgeGame} represents a fragment of the knowledge
  game associated
  with the parameterized arena of \Cref{fig:arena:nondeterministic}.
  Oval vertices belong to Eve, and rectangular ones to Adam.
  In this two-player game, Eve does not have a winning strategy
  from \((v, \natnonneg)\).
  A memoryless counterstrategy for Adam is drawn in bold in the
  figure. In particular, if Eve chooses to go to
  \((v, \natnonneg, a)\), Adam's strategy forces a cycle in the game
  that never visits \(t\).
  However, for any \(K \subseteq \natnonneg\), Eve has a winning
  strategy from \((y_1, K)\) and from \((y_2, K)\). In particular, she
  wins from \(y_1\) and from \(y_2\) in the parameterized game, thanks
  to \Cref{thm:knowledge:strategy}.
\end{example}

\section{Symbolic algorithms for parameterized games}\label{sec:symbolic}

Let \(\game\) be a parameterized reachability game, and \(\kGame\) be
its associated knowledge game.  Our objective is to compute Eve's
winning region in \(\kGame\).  Thanks to
\Cref{thm:knowledge:strategy}, deriving Eve's winning region in
\(\game\) will then be straightforward.  Instead of storing every
vertex of the winning region in \(\kGame\), we obtain a more compact
representation by leveraging \emph{antichains}.  We first formally
define antichains and their properties, before specializing them to
represent vertices in \(\kGame\).  We then give a fixed-point
characterization of the winning region via antichains and show its
correctness, before providing a different, more operational
fixed-point approach. We finally prove the non-trivial
fact that both algorithms compute the same sets, \ie, they both
output Eve's winning region in \(\kGame\).

\subsection{A summary of the theory of antichains}\label{subsec:antichains}

Before introducing antichains (in short, sets of incomparable elements according
to some partial order), let us focus on partially ordered sets
and (complete) lattices.

\subparagraph{Partially ordered sets.}

Formally, for \(X\) a set, a \emph{partial order} on $X$
is a binary relation \({\parOrder} \subseteq X \times X\) that is
reflexive, transitive, and antisymmetric (\ie, for any \(a, b \in X\),
if \(a \parOrder b\) and \(b \parOrder a\), then \(a = b\)).  The pair
\((X, \parOrder)\) denotes a \emph{partially ordered set}.
We say that \((X, \parOrder)\) is a \emph{lattice} if every
pair of elements \(x \neq y\) of \(X\) has a
\emph{join} (a least upper bound) and a \emph{meet} (a greatest lower bound)
that both belong to \(X\).

\begin{example}\label{ex:lattice}
  The partially ordered set \((\subsets{\nat}, \subseteq)\) is a
  lattice, with the join defined as the union, and the meet as the
  intersection. For instance, the join of \(\{1, 3\}\) and
  \(\{2, 3\}\) is \(\{1, 2, 3\}\), while their meet is \(\{3\}\). As
  both elements belong to \(\subsets{\nat}\) (and this can be easily
  generalized to any pair of sets), \((\subsets{\nat}, \subseteq)\) is
  indeed a lattice.

  As another example, let us consider the partially ordered set
  \((\{1, 2, 3, 6\}, \parOrder)\) such that, for any
  \(a, b \in \nat\), \(a \parOrder b\) if and only if \(a\) divides
  \(b\).  Then, by defining the join as the least common multiple
  (\eg, the join of 2 and 3 is 6) and the meet as the greatest common
  divisor (\eg, the meet of 2 and 3 is 1),
  \((\{1, 2, 3, 6\}, \parOrder)\) forms a lattice. However, the
  partially ordered set \((\{1, 2, 3\}, \parOrder)\) does not form a
  lattice, as the join of 2 and 3 is not defined within
  \(\{1, 2, 3\}\).
\end{example}

\subparagraph{Complete lattices.}
Throughout this paper, we will rely on special lattices: a partially ordered
set \((X, \parOrder)\) is a \emph{complete lattice} if, for every
subset \(L \subseteq X\), \(L\) has both a join and a meet in
\((X, \parOrder)\). Notice that any complete lattice is necessarily a lattice.

\begin{example}\label{ex:lattice:complete}
  The lattice \((\subsets{\nat}, \subseteq)\) is actually a complete
  lattice.  Indeed, for any \(L \subseteq \subsets{\nat}\), let join
  be the union and meet be the intersection.  For instance, the join
  of \(\{\{1, 3\}, \{2, 3\}, \{3, 4\}\}\) is \(\{1, 2, 3, 4\}\), and
  its meet is \(\{3\}\).

  The lattice \((\{1, 2, 3, 6\}, \parOrder)\) from the previous
  example is also a complete lattice. Indeed, since there are finitely
  many elements in this lattice, any subset must also be finite and,
  thus, has well-defined join and meet. This holds for any non-empty
  finite lattice.  Naturally, since \((\{1, 2, 3\}, \parOrder)\) is
  not a lattice, it cannot be a complete lattice.
\end{example}

\begin{proposition}\label{prop:lattice:finite:complete}
  Any non-empty finite lattice is a complete lattice.
\end{proposition}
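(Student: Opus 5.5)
The plan is to show directly that every subset \(L \subseteq X\) of a non-empty finite lattice \((X, \parOrder)\) has both a join and a meet in \(X\), by splitting into three cases according to the size of \(L\).

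\textbf{Non-empty subsets.} Write \(L = \{x_1, \dots, x_n\}\) with \(n \geq 1\); this is possible because \(X\) is finite. I will argue by induction on \(n\) that \(L\) has a join. For \(n = 1\), the join of \(\{x_1\}\) is \(x_1\) itself. For the inductive step, let \(y\) be the join of \(\{x_1, \dots, x_{n-1}\}\), and set \(z = y \lub x_n\), the join of the pair \(y, x_n\). This join exists by the lattice axiom when \(y \neq x_n\); when \(y = x_n\), we simply take \(z = y\).

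It remains to check that \(z\) is the join of \(L\).
\begin{itemize}
  \item \(z\) is an upper bound of \(L\): it bounds \(x_n\) directly, and it bounds each \(x_i\) with \(i < n\) by transitivity through \(y\).
  \item \(z\) is the least upper bound: any upper bound \(u\) of \(L\) bounds \(\{x_1, \dots, x_{n-1}\}\), hence \(y \parOrder u\). Since also \(x_n \parOrder u\), we get \(z \parOrder u\).
\end{itemize}
The argument for meets is dual, using the pairwise meet \(\glb\).

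\textbf{The empty subset.} This is the only step that needs care, and it is where non-emptiness of \(X\) is used. Every element of \(X\) is vacuously an upper bound of \(\emptyset\), so the join of \(\emptyset\) must be the least element of \(X\). Since \(X\) is non-empty and finite, I apply the previous case to \(L = X\) itself and obtain a meet \(\bot\) of \(X\). Because \(\bot \parOrder x\) for all \(x \in X\), \(\bot\) is the least element of \(X\), and hence the join of \(\emptyset\). Dually, the join \(\top\) of \(X\) is the greatest element of \(X\), and hence the meet of \(\emptyset\).

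Combining the two cases, every subset of \(X\) has a join and a meet, so \((X, \parOrder)\) is a complete lattice.
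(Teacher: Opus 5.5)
Your proof is correct and follows the same idea the paper relies on: in the example just before the proposition, the paper only remarks that every subset of a finite lattice is finite and therefore has a join and a meet, and gives no formal proof. Your induction makes that remark precise, and it also covers two points the paper leaves implicit: the $x \neq y$ restriction in the paper's definition of a lattice, and the empty subset, whose join and meet are the least and greatest elements obtained as the meet and join of $X$, which is exactly where non-emptiness is used.
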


\subparagraph{Antichains.}

Our presentation of antichains, a central tool in our work,
follows~\cite{DDR-hscc06}.  Intuitively, antichains are sets of
incomparable elements of a partially ordered set \((X, \parOrder)\).
An \emph{antichain} for \((X, \parOrder)\) is a subset
\(X' \subseteq X\) such that any two elements of \(X'\) are
incomparable for \(\parOrder\): \(\forall a \neq b \in X'\),
\(a \not\parOrder b\) and \(b \not\parOrder a\).

Let \(L \subseteq X\) be a set. An antichain can be used to represent or
abstract \(L\) by keeping only the maximal elements for \(\parOrder\).
We say that an element \(x \in X\) is \emph{dominated} by \(L\) if
there exists some \(x' \in L\) such that \(x \parOrder x'\) and
\(x \neq x'\).  We write \(\Dom(L)\) for the set of all elements of
\(X\) that are dominated by \(L\):
\(\Dom(L) = \{x \in X \mid \exists x' \in L : x \parOrder x' \land x
\neq x'\}\).

The \emph{reduced form} of \(L\), noted \(\reduce{L}\), is obtained by
removing from \(L\) all its dominated elements, while the
\emph{expanded form} of \(L\), noted \(\expand{L}\), is obtained by
adding to $L$ all dominated elements in $X$.  Formally,
\( \reduce{L} = L \setminus \Dom(L) \) and
\( \expand{L} = L \cup \Dom(L)\).  A set $L$ such that
$L = \expand{L}$ is said \emph{downward-closed}.  As
\(\Dom(L) \cap \reduce{L} = \emptyset\) (\ie, \(\reduce{L}\) does not
contain any element that is dominated by \(L\)), any two distinct
elements of \(\reduce{L}\) are incomparable for \(\parOrder\), and
\(\reduce{L}\) is an antichain.  Notice that, for any \emph{finite}
non-empty subset \(L\) of \(X\), there is at least one element that is
not dominated by \(L\). That is, \(L \nsubseteq \Dom(L)\), meaning
that \(\reduce{L}\) is not empty.  Moreover,
\(\expand{L} = \expand{\reduce{L}}\). This does not always hold for
infinite sets: for instance, the subset
\(L = \{\{1\}, \{1, 2\}, \dotsc\}\) of \((\subsets{\nat}, \subseteq)\)
does not have a maximal element, \ie, \(\reduce{L} =
\emptyset\). Among other results, the next proposition gives a
necessary and sufficient condition for
\(\expand{L} = \expand{\reduce{L}}\) to hold, even for infinite
sets. Antichains are particularly adapted to representing such
downward-closed sets \(\expand{L}\), as they are uniquely determined
from the antichain $\reduce{L}$.

\begin{restatable}{proposition}{propExpandReduce}\label{prop:expand_is_expand_reduce}
  For any sets \(L, L' \subseteq X\), the following facts all hold.
  \begin{enumerate}
    \item
      \(\reduce{L}\) is an antichain for
      \((X, \parOrder)\).
    \item
      \(\Dom(L) = \Dom(\expand{L})\).
    \item
      \(\expand{L} \cup \expand{L'} = \expand{(L \cup L')}\).
    \item
      \(\reduce{L} = \reduce{\expand{L}}\).
    \item
      \(\expand{L} = \expand{\reduce{L}} \iff \forall x\in L, \exists
      y \in \reduce{L} : x \parOrder y\).
    \item
      \(\expand{L} = \expand{\reduce{L}} \land
        \expand{L'} = \expand{\reduce{L'}}
        \implies
        \expand{\reduce{L \cup L'}} = \expand{(\reduce{L} \cup
      \reduce{L'})}\).
  \end{enumerate}
\end{restatable}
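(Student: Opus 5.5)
All six items follow directly from the definitions \(\Dom(L)=\{x\mid \exists x'\in L: x\parOrder x', x\neq x'\}\), \(\reduce{L}=L\setminus\Dom(L)\) and \(\expand{L}=L\cup\Dom(L)\), together with reflexivity, transitivity and antisymmetry of \(\parOrder\). We prove them in order, reusing earlier items where convenient. A convenient reformulation is that \(x\in\expand{L}\) if and only if \(x\parOrder y\) for some \(y\in L\) (take \(y=x\) if \(x\in L\)). All later steps reduce to this.

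\textbf{Items 1--4.}
For item 1, take distinct \(a,b\in\reduce{L}\). If \(a\parOrder b\), then \(a\in\Dom(L)\), which contradicts \(a\in\reduce{L}\). The case \(b\parOrder a\) is symmetric.

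For item 2, the inclusion \(\Dom(L)\subseteq\Dom(\expand{L})\) is immediate since \(L\subseteq\expand{L}\). Conversely, suppose \(x\parOrder x'\) with \(x\neq x'\) and \(x'\in\expand{L}\). Then \(x'\parOrder y\) for some \(y\in L\), so \(x\parOrder y\) by transitivity. If \(x=y\), then \(x\parOrder x'\parOrder x\), and antisymmetry forces \(x=x'\), a contradiction. Hence \(x\neq y\) and \(x\in\Dom(L)\).

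Item 3 follows from the characterization of \(\expand{}\), since \(\exists y\in L\cup L'\) splits as a disjunction. Equivalently, \(\Dom\) commutes with unions.

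For item 4, use item 2: \(\reduce{\expand{L}}=(L\cup\Dom(L))\setminus\Dom(L)=L\setminus\Dom(L)=\reduce{L}\).

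\textbf{Item 5.}
For (\(\Rightarrow\)), take \(x\in L\subseteq\expand{L}=\expand{\reduce{L}}\). By the characterization there is \(y\in\reduce{L}\) with \(x\parOrder y\).

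For (\(\Leftarrow\)), the inclusion \(\expand{\reduce{L}}\subseteq\expand{L}\) holds because \(\reduce{L}\subseteq L\) and \(\expand{}\) is monotone. For the other inclusion, let \(x\in\expand{L}\), so \(x\parOrder z\) for some \(z\in L\). The hypothesis gives \(z\parOrder y\in\reduce{L}\), and transitivity yields \(x\in\expand{\reduce{L}}\).

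\textbf{Item 6.}
This is the only step requiring some care. By item 3 and the two hypotheses,
\(\expand{(\reduce{L}\cup\reduce{L'})}=\expand{\reduce{L}}\cup\expand{\reduce{L'}}=\expand{L}\cup\expand{L'}=\expand{(L\cup L')}\).
It therefore suffices to prove \(\expand{\reduce{L\cup L'}}=\expand{(L\cup L')}\). By item 5 this amounts to showing that every \(x\in L\cup L'\) lies below some maximal element of \(L\cup L'\).

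Suppose \(x\in L\). Item 5 applied to \(L\) gives \(x\parOrder y\) with \(y\in\reduce{L}\). If \(y\in\reduce{L\cup L'}\) we are done. Otherwise \(y\) is strictly dominated by some \(z\in L'\), since it cannot be strictly dominated by an element of \(L\). Item 5 applied to \(L'\) gives \(z\parOrder w\in\reduce{L'}\). Now \(w\) cannot be strictly dominated by an element of \(L'\). If \(w\parOrder u\) with \(u\in L\) and \(u\neq w\), then \(y\parOrder u\) with \(u\neq y\), since \(u=y\) would give \(y\parOrder w\parOrder y\) and hence \(y=w\), contradicting the strict step from \(y\) to \(z\). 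This contradicts \(y\in\reduce{L}\). Hence \(w\in\reduce{L\cup L'}\), and \(x\parOrder w\). The case \(x\in L'\) is symmetric.

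This final chase is the main obstacle, because the hypotheses are essential here: without them, as in the infinite chain example, maximal elements of \(L\) may fail to exist.
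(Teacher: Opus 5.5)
Your proof is correct and takes essentially the same route as the paper's proof, which is only given in the full version but, for facts this elementary, is a direct definition chase; in your version all of items 1--5 reduce to the characterization $x\in\expand{L}\iff\exists y\in L: x\parOrder y$ together with transitivity and antisymmetry. For item 6, your chase is exactly the step needed: an element $y\in\reduce{L}$ that is not maximal in $L\cup L'$ is strictly below some $z\in L'$, hence below some $w\in\reduce{L'}$, and $w$ must be maximal in $L\cup L'$, since otherwise antisymmetry would make $y$ dominated within $L$.
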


\subparagraph{Join and meet for antichains.}
Let us argue that the set of antichains of \((X, \parOrder)\) is a
complete lattice.
To that end, we define the join and meet of two sets \(L, L'
\subseteq X\) as follows.
\begin{description}
  \item[Join] An element \(x\) in
    \(\expand{L} \cup \expand{L'}\) is maximal when it is maximal in
    one of the two sets \(L\) or \(L'\). Hence, we can define the
    \emph{join} operation of \(L\) and \(L'\) as
    \(L \lub L' = \reduce{\expand{L} \cup \expand{L'}}\).
    Since \(\expand{L} \cup \expand{L'} = \expand{(L \cup L')}\),
    \(L \lub L' = \reduce{\expand{L} \cup \expand{L'}} =
    \reduce{\expand{(L \cup L')}} = \reduce{L \cup L'}\), by
    \Cref{prop:expand_is_expand_reduce}.
  \item[Meet] The \emph{meet}
    operation of \(L\) and \(L'\) is
    \(L \glb L' = \reduce{\expand{L} \cap \expand{L'}}\). It
    does not coincide in general with $\reduce{L \cap L'}$.
\end{description}

We emphasize that \(L \lub L'\) and \(L \glb L'\) are antichains
for \((X, \parOrder)\).  The join operation provides the least upper
bound, while the meet operation gives the greatest lower bound
for the two sets \(L\) and \(L'\).
Thus, the set of all antichains of \((X, \parOrder)\) forms a
\emph{complete lattice}.
When \(\expand{L} = \expand{\reduce{L}}\) and
\(\expand{L'} = \expand{\reduce{L'}}\), one can show that computing
\(L \lub L'\) is equivalent to computing the join of the maximal
elements of \(L\) and \(L'\).

\begin{proposition}\label{prop:lub:reduceLub}
  Let \((X, \parOrder)\) be a partially ordered set, and
  \(L, L' \subseteq X\) be two sets such that \(\expand{L} =
  \expand{\reduce{L}}\) and \(\expand{L'} = \expand{\reduce{L'}}\).
  Then,
  \(L \lub L' = \reduce{L} \lub \reduce{L'}\).
\end{proposition}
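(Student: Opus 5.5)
We must show \(L \lub L' = \reduce{L} \lub \reduce{L'}\). By definition of the join, the left side is \(\reduce{\expand{L} \cup \expand{L'}}\) and the right side is \(\reduce{\expand{\reduce{L}} \cup \expand{\reduce{L'}}}\). The plan is to show that the two sets inside the outer reduction coincide, after which the equality follows immediately.

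The key step uses the hypotheses directly. Since \(\expand{L} = \expand{\reduce{L}}\) and \(\expand{L'} = \expand{\reduce{L'}}\), we get
\[
  \expand{L} \cup \expand{L'} = \expand{\reduce{L}} \cup \expand{\reduce{L'}},
\]
and applying \(\reduce{\cdot}\) to both sides gives \(L \lub L' = \reduce{L} \lub \reduce{L'}\). For robustness I would also state the route through the paper's simplified form of the join, \(L \lub L' = \reduce{L \cup L'}\), obtained from items 3 and 4 of \Cref{prop:expand_is_expand_reduce}. With it, the claim reads \(\reduce{L \cup L'} = \reduce{\reduce{L} \cup \reduce{L'}}\). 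Item 6 of \Cref{prop:expand_is_expand_reduce} gives \(\expand{\reduce{L \cup L'}} = \expand{(\reduce{L} \cup \reduce{L'})}\), and applying \(\reduce{\cdot}\) together with item 4 (\(\reduce{\expand{M}} = \reduce{M}\)) yields \(\reduce{\reduce{L \cup L'}} = \reduce{\reduce{L} \cup \reduce{L'}}\). The remaining point is idempotence of the reduction: \(\reduce{A} = A\) whenever \(A\) is an antichain, since an antichain has no dominated elements. This applies to \(\reduce{L \cup L'}\) by item 1.

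There is essentially no obstacle; the only subtlety is that the hypotheses are genuinely needed. For example, for an infinite chain \(L\) with \(\reduce{L} = \emptyset\), the right-hand side loses information, and this is exactly what the assumption \(\expand{L} = \expand{\reduce{L}}\) rules out. I would present the direct first argument as the proof.
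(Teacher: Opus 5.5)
Your proof is correct, and your first argument is already complete. The paper defines $L \lub L' = \reduce{\expand{L} \cup \expand{L'}}$ for arbitrary subsets, so substituting the two hypotheses inside the outer reduction gives $\reduce{L} \lub \reduce{L'}$ directly, with nothing needed from \Cref{prop:expand_is_expand_reduce}.

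The paper's source does not show its proof of this proposition, but it sets the result up differently. It rewrites the join in the simplified form $L \lub L' = \reduce{L \cup L'}$, and it includes item~6 of \Cref{prop:expand_is_expand_reduce}, whose hypotheses are exactly those of this proposition. That corresponds to your second route (item~6, then item~4, then idempotence of $\reduce{\cdot}$ on antichains), which is also sound.

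The two routes buy different things. Your substitution proves only the stated equality of antichains. The item-6 route also establishes that $\expand{(L \lub L')} = \expand{(\reduce{L} \cup \reduce{L'})} = \expand{L} \cup \expand{L'}$, so the join still represents the union of the two downward closures. That is the fact the paper actually uses in \Cref{cor:W:WNonDecreasing}.

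One correction to your closing remark on necessity. An infinite chain $L$ with $\reduce{L} = \emptyset$ does not by itself break the statement: with $L' = \emptyset$, both sides are $\emptyset$. You need $L'$ to contain an element strictly below the chain. For example, take $L = \{\{1,2\},\{1,2,3\},\dotsc\}$ and $L' = \{\{1\}\}$ in $(\subsets{\nat},\subseteq)$. Then $L \lub L' = \emptyset$, but $\reduce{L} \lub \reduce{L'} = \{\{1\}\}$.
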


Finally, we lift $\parOrder$ to subsets of $X$: for two
subsets $L,L' \subseteq X$, we write $L \parOrdersim L'$ whenever for
every $a \in L$, there is $a' \in L'$ such that $a \parOrder a'$. In
particular,
$L \subseteq L'$ implies $L \parOrdersim L'$,
$\reduce{L} \parOrdersim L \parOrdersim \expand{L}$, and
$\expand{L} \parOrdersim L \parOrdersim \reduce{L}$.

The following lemma
relates several of the above-defined notions, and will be useful in
the sequel:
\begin{restatable}{lemma}{lemmareduce}\label{lemma:reduce}
  For any partially ordered set \((X, \parOrder)\)
  and \(L, L' \subseteq X\) such that \(\expand{L} =
  \expand{\reduce{L}}\) and \(\expand{L'} = \expand{\reduce{L'}}\), the three
  following properties are
  equivalent:\[
    \mathit{1.}\ \reduce{L} = \reduce{L'}; \qquad \mathit{2.}\ L
    \parOrdersim L'\ \text{and}\ L'
    \parOrdersim L; \qquad \mathit{3.}\ L \subseteq
    \expand{L'}\ \text{and}\ L' \subseteq \expand{L}.
  \]
\end{restatable}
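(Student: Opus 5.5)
We prove the cycle of implications $1 \implies 3 \implies 2 \implies 1$. Throughout, the hypothesis $\expand{L} = \expand{\reduce{L}}$ is used via item~5 of \Cref{prop:expand_is_expand_reduce}: every $x \in L$ lies below some $y \in \reduce{L}$, and similarly for $L'$.

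\emph{From 1 to 3.} Take $x \in L$. Item~5 gives some $y \in \reduce{L} = \reduce{L'} \subseteq L'$ with $x \parOrder y$. Then $x = y \in L'$, or $x \in \Dom(L')$; in both cases $x \in \expand{L'}$. The inclusion $L' \subseteq \expand{L}$ follows by symmetry.

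\emph{From 3 to 2.} Take $x \in L$. Then $x \in \expand{L'} = L' \cup \Dom(L')$. So either $x \in L'$, in which case $x \parOrder x$ suffices, or there is $x' \in L'$ with $x \parOrder x'$. Either way $L \parOrdersim L'$, and the reverse direction is symmetric. This step does not even need the hypothesis.

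\emph{From 2 to 1.} This is the main step, and it uses antisymmetry. By symmetry it suffices to show $\reduce{L} \subseteq \reduce{L'}$.
\begin{itemize}
  \item Let $y \in \reduce{L}$. From $L \parOrdersim L'$ we get $y' \in L'$ with $y \parOrder y'$.
  \item By item~5 for $L'$, there is $z \in \reduce{L'}$ with $y' \parOrder z$.
  \item From $L' \parOrdersim L$ we get $w \in L$ with $z \parOrder w$.
  \item The chain $y \parOrder y' \parOrder z \parOrder w$ gives $y \parOrder w$ with $w \in L$. Since $y$ is not dominated by $L$, we must have $w = y$.
  \item Antisymmetry then collapses the chain to $y = z$, so $y \in \reduce{L'}$.
\end{itemize}

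The only delicate point is the need to pass through $z \in \reduce{L'}$ rather than stopping at $y'$. We need $y' \in \reduce{L'}$, and without the hypothesis $y'$ could be non-maximal in $L'$ with no maximal element above it; that is exactly where the hypothesis enters.
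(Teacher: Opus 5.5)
Your proof is correct and takes the expected direct route. It is a cycle of implications in which item~5 of \Cref{prop:expand_is_expand_reduce} supplies maximal elements above given points, and antisymmetry closes the loop in $2 \implies 1$.

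Your closing remark, however, misplaces the role of the hypothesis, since $2 \implies 1$ does not need it. Stop at $y'$ and pick $w \in L$ with $y' \parOrder w$. Then $y \parOrder w$ forces $w = y$, hence $y' = y \in L'$. Now suppose some $u \in L'$ satisfied $y \parOrder u$ with $u \neq y$. Then $u \parOrder w'$ for some $w' \in L$, so $y \parOrder w'$ and non-domination of $y$ in $L$ forces $w' = y$. Antisymmetry then gives $u = y$, a contradiction, so $y \in \reduce{L'}$. Equivalently, mutual simulation amounts to $\expand{L} = \expand{L'}$, and item~4 yields $\reduce{L} = \reduce{\expand{L}} = \reduce{\expand{L'}} = \reduce{L'}$.

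Thus $3 \implies 2 \implies 1$ hold for arbitrary $L, L'$, and the hypothesis is indispensable only in $1 \implies 3$. For instance, take the chain $L = \{\{1\}, \{1,2\}, \{1,2,3\}, \dotsc\}$ in $(\subsets{\nat}, \subseteq)$ and $L' = \emptyset$. Then $\reduce{L} = \reduce{L'} = \emptyset$, but $L \not\subseteq \expand{L'}$.
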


\subsection{Lattices and antichains for the vertices of
Eve}\label{subsec:appli-antichains}

Aiming at using antichains to compute the winning region for Eve in
\(\kGame\), we make the following observation: if Eve has a winning
strategy from some \((v, K) \in \verticesE\), then she also wins from
any \((v, K')\) such that \(K' \subseteq K\).  In words, refining the
knowledge on the number of opponents can only help Eve to reach a
target vertex. Therefore, her winning region is downward-closed for a
natural partial order on her vertices in $\kGame$, which we now define.

Given a set \(S\), we write \(\subsets{S}\) for the set of all subsets
of \(S\).  Now, one naturally defines the following partial order on
vertices of Eve in the knowledge game: for any
\((v, K), (v', K') \in \verticesE\), \((v, K) \parOrder (v', K')\) if
\(v = v'\) and \(K \subseteq K'\). Equipped with this partial order,
$(\verticesE,\parOrder)$ is a complete lattice (as
  \((\subsets{\natnonneg}, \subseteq)\) is a complete
lattice), which will be
used to compute the winning region for Eve in \(\kGame\).

We use the notation \(\expand{\cdot}\) for downward closure both in
\(\subsets{\natnonneg}\) and \(\verticesE\) (no
  confusion should arise since elements of these two lattices are
``typed'' differently). Recall that $\reduce{\cdot}$
denotes the maximal elements.

\begin{example}
  Let \(\arena\) be the arena of \Cref{fig:arena:nondeterministic} and
  \(L\) be
  \(\{(t, \natnonneg), (y_1, \{1\}),\allowbreak (y_1, \{1, 2\}),
  (y_2, \{1, 2\})\}\).
  The pair \((y_1, \{1\})\) is dominated by \(L\), as \((y_1, \{1, 2\}) \in L\),
  \((y_1, \{1\}) \parOrder (y_1, \{1, 2\})\), and
  \((y_1, \{1\}) \neq (y_1, \{1, 2\})\).
  Moreover, any \((t, K)\) with \(K \subsetneq \natnonneg\) is dominated by
  \(L\).
  However, \((y_1, \{1, 2\})\) is not dominated, as the only element \((v, K)\)
  of \(L\) such that \((y_1, \{1, 2\}) \parOrder (v, K)\) is equal to
  \((y_1, \{1, 2\})\).
  Likewise, \((t, \natnonneg)\) and \((y_2, \{1, 2\})\) are not dominated.
  Thus,
  \begin{equation*}
    \Dom(L) = \{(t, K) \mid K \subsetneq \natnonneg\}
    \cup
    \{(y_1, \emptyset), (y_1, \{1\}), (y_1, \{2\}),
    (y_2, \emptyset), (y_2, \{1\}), (y_2, \{2\})\},
  \end{equation*}
  and \(\reduce{L} = L \setminus \Dom(L) =
  \{(t, \natnonneg), (y_1, \{1, 2\}), (y_2, \{1, 2\})\}\).
\end{example}

\subsection{Fixed-point characterization of the winning
region}\label{subsec:symbolic-algo}
Similarly to the case of standard two-player turn-based reachability
games, the winning region of Eve in the knowledge game
is the least fixed-point of an analogue of the
constrained-predecessor
operator.  Since
\(\final = \{(\target, K) \mid K \subseteq \natnonneg\}\), the
fixed-point is initialized with \((\target, \natnonneg)\), as well as
all \((v, \emptyset)\), for every
\(v \in \vertices \setminus \{\target\}\).
Indeed, \((v, \emptyset)\) represents Eve's current
knowledge on the numbers of opponents she can win against from \(v\):
a priori, she always loses.
The operator \(\Pred{\cdot}\) then reflects the edges in \(\kGame\) by
incorporating all pairs \((v, K)\) such that for some action $a$
enabled at $v$, \((v', K \cap \opponents{v, a, v'})\) is already known
to be winning for every successor vertex $v'$.

\begin{definition}\label{def:symbolic}\label{def:fixpoint}
  For $(v, a) \in \vertices \times \alphabet$ and \(S \subseteq
  \verticesE\), let
  \(\KPred{v,a}{S} \) be the operator on $\subsets{\natnonneg}$,
  and let \(\Pred{S}\) be the operator on \(\subsets{\verticesE}\)
  defined as
  \begin{gather*}
    \KPred{v,a}{S} = \{K \in \subsets{\natnonneg} \mid \forall v' \in
    \vertices : (v', K \cap \opponents{v, a, v'}) \in S\}
    \\
    \Pred{S} = \bigcup_{v\in \vertices} \bigcup_{a\in \En(v)}\{(v,K)
    \mid K \in \KPred{v,a}{S}\}.
  \end{gather*}
  Finally, the sets \({(\W^i)}_{i \in \nat} \subseteq \verticesE\) are
  defined inductively:
  \begin{align*}
    \W^0 &= \{(\target, \natnonneg)\} \cup
    \{(v, \emptyset) \mid v \in \vertices \setminus \{\target\}\}
    \\
    \forall i \geq 0 : \W^{i+1} &= \W^i \lub
    \Pred{\expand{\W^i}} \quad\quad
    \text{\small (computation made in
    lattice $(\verticesE,\parOrder)$)}.
  \end{align*}
\end{definition}
The sets ${(\W^i)}_{i \in \nat}$ are well-defined as subsets of
$\verticesE$. By definition of the join operation,
we only keep the maximal elements with respect to \(\parOrder\):
each \(\W^i\) is an antichain of \((\verticesE, \parOrder)\).
Furthermore, we only consider actions \(a\) that are enabled at \(v\)
to ensure that there exists \(v'\) such that \(\opponents{v,a,v'}\) is
not empty.
Interestingly, there is no need to construct \(\kGame\) to define the
iterations of \(\W^i\).

\begin{example}\label{ex:WK}
  Let \(\game = (\arena, t)\), with
  \(\arena\) the arena of \Cref{fig:arena:nondeterministic}.
  We illustrate the approach with the full lattice \(\subsets{\natnonneg}\).
  The initial set is
  \[
    \W^0 = \{(s, \emptyset), (v, \emptyset), (x_1, \emptyset),
      (x_2, \emptyset),
    (y_1, \emptyset), (y_2, \emptyset), (t, \natnonneg)\}.
  \]
  Now, \(\KPred{v, c}{\expand{\W^0}}\) contains \(\natsubs{>2}\), as
  \((t, \natsubs{>2}) \in \expand{\W^0}\) and, for every \(v' \neq t\),
  \((v', \emptyset) \in \expand{\W^0}\).  Similarly,
  \(\KPred{v, c}{\expand{\W^0}}\) contains
  \(\{3\}, \{4\}, \dotsc, \{3, 4\},
  \dotsc, \allowbreak{\natsubs{>3}, \natsubs{>4},} \dotsc\)
  which are all subsets of \(\natsubs{>2}\).  Recall that
  \(\W^0 \sqcup \Pred{\expand{\W^0}}\) only contains the non-dominated
  elements, \ie, only \(\natsubs{>2}\) in this case.  Moreover,
  \(\KPred{y_1, b}{\expand{\W^0}} = \KPred{y_2, b}{\expand{\W^0}}\)
  contains \(\natnonneg\) and, thus, every subset of \(\natnonneg\).
  Yet, we have
  \(\KPred{x_1, b}{\expand{\W^0}} = \KPred{x_2, b}{\expand{\W^0}} =
  \{\emptyset\}\), as \((v, \emptyset), (y_1, \emptyset)\), and
  \((y_2, \emptyset)\) all belong to \(\W^0\). So,
  \[
    \W^1 = \{(s, \emptyset), (v, \natsubs{>2}), (x_1,
      \emptyset), (x_2, \emptyset), (y_1, \natnonneg), (y_2,
    \natnonneg), (t, \natnonneg)\}.
  \]
  Observe that, at each iteration, it is sufficient to consider only
  the predecessors of the vertices whose knowledge set changed, \ie,
  only \(x_1\) and \(x_2\) in this example.  We focus on \(x_1\), as
  the computations for
  \(x_2\) are similar:
  \(\KPred{x_1, b}{\expand{\W^1}}\) does not contain \(\natnonneg\),
  as \((v, \natsubs{\neq 1}) \notin \expand{\W^1}\).  However,
  \(\natsubs{\neq 2} \in \KPred{x_1, b}{\expand{\W^1}}\), as
  \((v, \natsubs{\neq 2} \cap \natsubs{\neq 1}) = (v,
  \natsubs{>2}) \in \expand{\W^1}\) and
  \((y_1, \natsubs{\neq 2} \cap \{1\}) = (y_1, \{1\}) \in
  \expand{\W^1}\).  Hence,
  \[
    \W^2 = \{(s, \emptyset), (v, \natsubs{>2}), (x_1, \natsubs{\neq
      2}), (x_2, \natsubs{\neq 1} ), (y_1, \natnonneg), (y_2,
    \natnonneg), (t, \natnonneg)\}.
  \]
  As only the knowledge sets of \(x_1\) and \(x_2\) are different, it
  suffices to compute \(\KPred{v, a}{\expand{\W^2}}\). Doing so, one
  gets \(\W^3 = \W^2\), hence
  \({(\expand{\W^i})}_{i \in \natnonneg}\) stabilizes.
\end{example}

Observe that
\(\reduce{\W^i} = \reduce{\W^i \lub \Pred{\expand{\W^i}}}
= \W^i \lub \Pred{\expand{\W^i}} = \W^i\), as \(\lub\) already
keeps only the maximal elements.
By \Cref{prop:expand_is_expand_reduce}, \(\W^i = \reduce{\W^i} =
\reduce{\expand{\W^i}}\).
We argue that the sequence
\({(\expand{\W^i})}_{i \in \nat}\) is non-decreasing. To that end, we
show the following lemma, allowing us to apply
\Cref{prop:lub:reduceLub}.

\begin{lemma}\label{lemma:W:welldefined}
  For any \(i \in \nat\), \(\expand{\Pred{\expand{\W^i}}} =
  \expand{\reduce{\Pred{\expand{\W^i}}}}\).
\end{lemma}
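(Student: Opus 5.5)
By item 5 of \Cref{prop:expand_is_expand_reduce}, it suffices to show that every element of \(P := \Pred{\expand{\W^i}}\) lies below some maximal element of \(P\). The plan is to prove that, for each \(v\) and each \(a \in \En(v)\), the family \(\KPred{v,a}{\expand{\W^i}}\) is closed under arbitrary unions. It then has a greatest element, and that greatest element is maximal in \(P\).

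\textbf{Step 1 (a finite description of \(\expand{\W^i}\)).} First I would show by induction that \(\W^i\) is finite and contains, for every \(u \in \vertices\), at least one pair \((u, K)\); here \(\W^0\) has exactly one pair per vertex. Write \(\mathcal{K}_u = \{K' \mid (u,K') \in \W^i\}\). Then \((u, K) \in \expand{\W^i}\) if and only if \(K \subseteq K'\) for some \(K' \in \mathcal{K}_u\). Finiteness of \(\W^i\) follows inductively once Step 2 is done, since then each pair \((v,a)\) contributes a single maximal element to \(P\). Ideally each \(\mathcal{K}_u\) would be a singleton.

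\textbf{Step 2 (union-closure).} This is the main obstacle. Fix \(v,a\) and let \(\{K_j\}_{j \in J} \subseteq \KPred{v,a}{\expand{\W^i}}\). For each successor \(v'\), every \(K_j \cap \opponents{v,a,v'}\) lies below some element of \(\mathcal{K}_{v'}\). If \(\mathcal{K}_{v'}\) is a singleton \(\{K^*_{v'}\}\), then \(\bigcup_j K_j \cap \opponents{v,a,v'} \subseteq K^*_{v'}\), and union-closure follows. So the first thing I would try is to prove the invariant that \(\W^i\) contains exactly one pair per vertex. With that invariant the greatest element of \(\KPred{v,a}{\expand{\W^i}}\) is explicit:
\[
  K_{v,a} = \bigcap_{v' \in \vertices} \Bigl( K^*_{v'} \cup (\natnonneg \setminus \opponents{v,a,v'}) \Bigr).
\]
The invariant itself is problematic: different actions \(a\) at the same \(v\) may give incomparable maxima, and then \(\W^{i+1}\) has several pairs for \(v\). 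If this happens I would fall back on the finiteness of \(\W^i\) and argue differently.

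\textbf{Fallback argument.} Take any \((v,K) \in P\) witnessed by an action \(a\). For each \(v'\), choose \(K'_{v'} \in \mathcal{K}_{v'}\) with \(K \cap \opponents{v,a,v'} \subseteq K'_{v'}\). Now define
\[
  \hat K = \bigcap_{v'} \Bigl( K'_{v'} \cup (\natnonneg \setminus \opponents{v,a,v'}) \Bigr).
\]
Then \(K \subseteq \hat K\), and \(\hat K \cap \opponents{v,a,v'} \subseteq K'_{v'}\) for every \(v'\), so \((v,\hat K) \in P\). Only finitely many such \(\hat K\) arise, one for each choice function \(v' \mapsto K'_{v'}\) and each action \(a\). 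Every element of \(P\) therefore lies below an element of the finite set \(\hat P \subseteq P\) of these candidates. Each element of the finite set \(\hat P\) lies below a maximal element of \(\hat P\). That element is also maximal in \(P\): any \((v,K'') \in P\) strictly above it is itself below some element of \(\hat P\), which would contradict maximality in \(\hat P\).

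Together these steps give the condition of item 5, and therefore \(\expand{P} = \expand{\reduce{P}}\). The finiteness of \(\W^{i+1}\) needed for the induction follows because \(\reduce{P} \subseteq \hat P\) is finite.
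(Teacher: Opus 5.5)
Your fallback argument is correct and self-contained, and it takes a genuinely different route from the paper. The paper argues by ascending chains. From a non-maximal \((v,K_1)\) it builds \(K_1 \subsetneq K_2 \subsetneq \dotsb\) inside \(\Pred{\expand{\W^i}}\), groups the \(K_j\) by action, and takes the union \(\mathfrak{K}_a = \bigcup_j K_j\). It then argues that \((v', \mathfrak{K}_a \cap \opponents{v,a,v'})\) stays in \(\expand{\W^i}\), and finishes using the finiteness of \(\alphabet\). That middle step claims that an increasing chain of knowledge sets lying below \(\W^i\) at a fixed \(v'\) is bounded by a single element of \(\W^i\). This tacitly relies on \(\W^i\) being finite (pigeonhole), which is not established there. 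The passage to a ``maximal'' chain is also a Zorn-type step left informal.

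You avoid chains altogether. For a fixed action and a fixed choice of witnesses \(K'_{v'}\), your \(\hat K\) is the largest knowledge set whose \(a\)-successors are covered by those witnesses. So the candidates form a finite cofinal subset \(\hat P \subseteq P\), and its maximal elements are automatically maximal in \(P\). You also prove the finiteness of \(\W^i\) by the simultaneous induction instead of assuming it. To make that step explicit, note that \(\W^{i+1} = \reduce{\W^i \cup P} \subseteq \W^i \cup \reduce{P} \subseteq \W^i \cup \hat P\).

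Your construction also anticipates the paper's alternative operator. Since the sets \(\opponents{v,a,V'}\) partition \(\natnonneg\), your \(\hat K\) equals \(\bigcup_{V' \subseteq \vertices}(\opponents{v,a,V'} \cap \bigcap_{v' \in V'} K'_{v'})\). That is exactly an element of \(\KPredAlt{v,a}{\W^i}\). So your argument already contains the comparison between \(\KPred{v,a}{\cdot}\) and \(\KPredAlt{v,a}{\cdot}\) that the paper develops later for Theorem~\ref{thm:same-fixed-point}. What the paper's route offers in exchange is that it stays at the level of the operator and of joins in \((\subsets{\natnonneg}, \subseteq)\), without an explicit formula for maximal elements.

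One caveat on the write-up: the plan announced at the start and Step 2 are false in general, so you should delete them rather than keep them as a first attempt. Suppose \(v'\) has one action leading to \(\target\) exactly when \(k=1\) (and to a losing sink otherwise), and another action doing so exactly when \(k=2\). Then \(\W^1\) contains the incomparable pairs \((v',\{1\})\) and \((v',\{2\})\). An edge \(v \xrightarrow{a,\natnonneg} v'\) then puts \(\{1\}\) and \(\{2\}\), but not \(\{1,2\}\), into \(\KPred{v,a}{\expand{\W^1}}\). Likewise, the finiteness justification in Step 1 should be the one from your last sentence, not the one via Step 2. The ``at least one pair per vertex'' invariant is never used and can be dropped.
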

\begin{proof}
  By \Cref{prop:expand_is_expand_reduce}, we show that, for any
  \((v, K) \in \Pred{\expand{\W^i}}\), there exists
  \((v', K') \in \reduce{\Pred{\expand{\W^i}}}\) such that
  \((v, K) \parOrder (v', K')\).
  By definition, it is only possible when \(v = v'\). We can thus rephrase
  it as:
  \begin{equation}
    \forall (v, K) \in \Pred{\expand{\W^i}},
    \exists (v, K') \in \reduce{\Pred{\expand{\W^i}}} :
    K \subseteq K'.
    \label{eq:lemma:W:welldefined:goal}
  \end{equation}

  Let \((v, K_1) \in \Pred{\expand{\W^i}}\). If \((v, K_1) \in
  \reduce{\Pred{\expand{\W^i}}}\), we immediately get the result,
  with \(K' = K_1\).
  Assume \((v, K_1) \notin \reduce{\Pred{\expand{\W^i}}}\), \ie,
  \((v, K_1) \in \Dom(\Pred{\expand{\W^i}})\).
  So, there exists some \((v, K_2) \in
  \Pred{\expand{\W^i}}\) such that \(K_1 \subsetneq K_2\). Again, if
  \((v, K_2)\) is a maximal element, we are done. Otherwise, there is
  some \((v, K_3) \in \Pred{\expand{\W^i}}\) such that \(K_2
  \subsetneq K_3\). We can repeat this
  again and again. Assume we never find a maximal element:
  we have the infinite sequence
  \(K_1 \subsetneq K_2 \subsetneq \dotsb\) such that each \((v, K_j) \in
  \Pred{\expand{\W^i}} \setminus \reduce{\Pred{\expand{\W^i}}}\).

  By definition of \(\Pred{\expand{\W^i}}\),
  for each \(j \in \natnonneg\), there exists \(a_j \in \En(v)\) such that,
  for any \(v' \in \vertices\), \((v', K_j \cap \opponents{v, a_j,
  v'}) \in \expand{\W^i}\).
  As \(\En(v)\) is finite, there are infinitely many \(\ell < \ell'\) such that
  \(a_\ell = a_{\ell'}\) and \(K_\ell \subsetneq K_{\ell'}\).
  So, we can group together all \(K_j\) that depend on
  the same action.
  Without loss of generality, we can assume that the
  sequence of \(K_j\) for a common action is maximal (with regard
  to \(\subseteq\)), \ie, assume that
  \(\exists a \in \En(v),
    \forall j \in \natnonneg,
    \forall v' \in \vertices :
  (v', K_j \cap \opponents{v, a, v'}) \in \expand{\W^i}\), which is
  equivalent to
  \[
    \exists a \in \En(v),
    \forall v' \in \vertices,
    \forall j \in \natnonneg :
    (v', K_j \cap \opponents{v, a, v'}) \in \expand{\W^i}.
  \]

  Let \(v' \in \vertices\).
  As \((\subsets{\natnonneg}, \subseteq)\) is a complete lattice,
  there exists \(\mathfrak{K}_a\) that is the join of the
  maximal set
  \(\{K_1, K_2, \dotsc\}\), \ie, such that, for every \(j\),
  \(K_j \subseteq \mathfrak{K}_a\). For this complete
  lattice, the join is the union of the sets (see
  \Cref{ex:lattice,ex:lattice:complete}):
  \(\mathfrak{K}_a = \bigcup_{j \in \natnonneg} K_j\).
  Notice that picking a different action \(a'\) may yield
  a different \(\mathfrak{K}_{a'}\), as the choice of the action may
  lead to different knowledge sets. Observe that, since \(\{K_1,
  K_2 \dotsc\}\) is the largest possible set, \(\mathfrak{K}_a\) is
  the maximal set for the selected action \(a\). We argue that
  there is an action that gives us \(\mathfrak{K}_a \in
  \reduce{\Pred{\expand{\W^i}}}\).

  Observe that \(K_1 \cap \opponents{v, a,
  v'} \subseteq K_2 \cap \opponents{v, a, v'} \subseteq \dotsb\), as
  \(K_1 \subsetneq K_2 \subsetneq \dotsb\).
  Recall that \(\W^i = \reduce{\W^i} = \reduce{\expand{\W^i}}\), so,
  there exists \((v, \mathfrak{K}'_a) \in \W^i\) such that, for all \(j\),
  \((v', K_j \cap \opponents{v, a, v'}) \parOrder (v, \mathfrak{K}_a)\), \ie,
  \(\mathfrak{K}'_a\) is the join of the set
  \(\{K_1 \cap \opponents{v, a, v'}, \dotsc\}\).
  Again, in
  \((\subsets{\natnonneg}, \subseteq)\), the join is the union:
  \(\mathfrak{K}'_a = \bigcup_{j \in \natnonneg} K_j \cap
  \opponents{v, a, v'}\).

  By definitions, \(\mathfrak{K}_a \cap \opponents{v, a, v'} = \bigcup_{j \in
  \natnonneg} K_j \cap \opponents{v, a, v'} = \mathfrak{K}'_a\). Hence,
  \((v', \mathfrak{K}_a \cap \opponents{v, a, v'}) \in \expand{\W^i}\),
  \ie, \((v, \mathfrak{K}_a) \in \Pred{\expand{\W^i}}\).
  It may happen that the selected \(a\) does not yield a
  maximal element: indeed, another action \(a'\) may lead us to \((v,
  \mathfrak{K}_{a'}) \in \Pred{\expand{\W^i}}\) (using exactly the
  same arguments), but such that \(\mathfrak{K}_a \subseteq
  \mathfrak{K}_{a'}\). Since there are finitely many actions, there
  necessarily exists an action that yields a maximal element of
  \(\Pred{\expand{\W^i}}\), \ie, there exists \(\mathfrak{a} \in
  \alphabet\) such that \((v, \mathfrak{K}_\mathfrak{a}) \in
  \reduce{\Pred{\expand{\W^i}}}\).
\end{proof}

\begin{corollary}\label{cor:W:WNonDecreasing}
  For every \(i \in \nat\), \(\expand{\W^i} \subseteq
  \expand{\W^{i+1}}\).
\end{corollary}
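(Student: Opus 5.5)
The plan is to unfold the definition \(\W^{i+1} = \W^i \lub \Pred{\expand{\W^i}}\) and then apply \Cref{prop:lub:reduceLub} together with the equalities collected in \Cref{prop:expand_is_expand_reduce}. The idea is to rewrite \(\expand{\W^{i+1}}\) as a union that visibly contains \(\expand{\W^i}\) as one of its parts.

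First, I will check the two hypotheses of \Cref{prop:lub:reduceLub}, taking \(L = \W^i\) and \(L' = \Pred{\expand{\W^i}}\). For \(L\), the text just before the lemma already notes that \(\W^i = \reduce{\W^i}\). Hence \(\expand{\W^i} = \expand{\reduce{\W^i}}\) holds trivially. For \(L'\), the hypothesis is exactly \Cref{lemma:W:welldefined}.

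By definition of the join, \(\W^{i+1} = \reduce{\W^i \cup \Pred{\expand{\W^i}}}\). Write \(U = \W^i \cup \Pred{\expand{\W^i}}\). Item~6 of \Cref{prop:expand_is_expand_reduce} then gives
\[
  \expand{\W^{i+1}} = \expand{\reduce{U}} = \expand{\bigl(\reduce{\W^i} \cup \reduce{\Pred{\expand{\W^i}}}\bigr)}.
\]
Item~3 splits the right-hand side as \(\expand{\reduce{\W^i}} \cup \expand{\reduce{\Pred{\expand{\W^i}}}}\). The first part equals \(\expand{\W^i}\), since \(\reduce{\W^i} = \W^i\). Therefore \(\expand{\W^i} \subseteq \expand{\W^{i+1}}\), which is the claim. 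As a side product, the second part equals \(\expand{\Pred{\expand{\W^i}}}\) by \Cref{lemma:W:welldefined}.

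The one step needing care is the passage from \(\expand{\reduce{U}}\) to the union of the downward closures of the two reduced sets. In general, \(\reduce{\cdot}\) can lose elements, as happens for infinite ascending chains. This is exactly why both hypotheses of item~6 must be established first, and \Cref{lemma:W:welldefined} is what provides the one for \(\Pred{\expand{\W^i}}\). Once that is in place, the rest is bookkeeping.
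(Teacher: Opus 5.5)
Your proposal is correct and follows the paper's proof step for step. You establish the two hypotheses ($\W^i = \reduce{\W^i}$ and \Cref{lemma:W:welldefined}), then use items~6 and~3 of \Cref{prop:expand_is_expand_reduce} to rewrite $\expand{\W^{i+1}}$ as $\expand{\W^i} \cup \expand{\reduce{\Pred{\expand{\W^i}}}}$; as in the paper, \Cref{prop:lub:reduceLub} is mentioned in your plan but never actually invoked.
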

\begin{proof}
  As \(W^i = \reduce{\W^i}\), \(\expand{\W^i} =
  \expand{\reduce{\W^i}}\).
  By \Cref{lemma:W:welldefined}, \(\expand{\Pred{\expand{\W^i}}} =
  \expand{\reduce{\Pred{\expand{\W^i}}}}\).
  Hence,
  \begin{align*}
    \expand{\W^{i+1}}
    &= \expand{(\W^i \lub \Pred{\expand{\W^i}})}
    = \expand{\reduce{\W^i \cup \Pred{\expand{\W^i}}}}
    & \text{by definition of \(\lub\)}
    \\
    &= \expand{(\reduce{\W^i} \cup \reduce{\Pred{\expand{\W^i}}})}
    & \text{by \Cref{prop:expand_is_expand_reduce}}
    \\
    &= \expand{\reduce{\W^i}} \cup \expand{\reduce{\Pred{\expand{\W^i}}}}
    & \text{by \Cref{prop:expand_is_expand_reduce}}
    \\
    &= \expand{\W^i} \cup \expand{\reduce{\Pred{\expand{\W^i}}}}
    & \text{as \(\expand{\W^i} = \expand{\reduce{\W^i}}\).}
  \end{align*}
  From there, it is clear that \(\expand{\W^i} \subseteq \expand{\W^{i+1}}\).
\end{proof}

From that result, the existence of \(\lim_{i \to +\infty}
\expand{\W^i} = \bigcup_{i \in \nat} \expand{\W^i}\) is immediate.
Nevertheless, it is unclear whether
$\lim_{i \to +\infty} \W^i$, denoted \(\W^\infty\) hereafter, exists.
To show that the sequence stabilizes, we will require
an alternative approach, which will be proved to terminate. We will
also argue that both compute the same sets at each iteration. Hence,
\(\W^\infty\) exists.
Nonetheless, we argue that the least
fixed-point of the \(\Pred{\cdot}\) operator of \Cref{def:fixpoint}
corresponds to the winning region for Eve in the knowledge game.

\begin{restatable}{theorem}{fixpointcorrectness}\label{thm:symbolic:strategy:region}
  Let \((v, K) \in \verticesE\) such that \(K \neq \emptyset\).  Eve
  has a winning strategy from \((v, K)\) in \(\kGame\) if and only if
  there exists $i \in \natnonneg$ such that
  \((v, K) \in \expand{\W^i}\).
\end{restatable}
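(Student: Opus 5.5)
The plan is to prove the two implications separately. Both use the identity established in the proof of \Cref{cor:W:WNonDecreasing}, together with \Cref{lemma:W:welldefined}:
\[
\expand{\W^{i+1}} = \expand{\W^i} \cup \expand{\Pred{\expand{\W^i}}}.
\]
The equality \(\expand{\reduce{\Pred{\expand{\W^i}}}} = \expand{\Pred{\expand{\W^i}}}\) comes from \Cref{lemma:W:welldefined}. I will also use that \(\Pred{\expand{\W^i}}\) is itself downward-closed. Indeed, if \(K \subseteq K'\) and \(K' \in \KPred{v,a}{\expand{\W^i}}\), then \(K \cap \opponents{v,a,v'} \subseteq K' \cap \opponents{v,a,v'}\) for every \(v'\). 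Since \(\expand{\W^i}\) is downward-closed, this gives \(K \in \KPred{v,a}{\expand{\W^i}}\).

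\emph{Soundness} (\((v,K) \in \expand{\W^i}\) implies Eve wins). I argue by induction on \(i\) that every \((v,K) \in \expand{\W^i}\) with \(K \neq \emptyset\) is winning for Eve in \(\kGame\).
\begin{itemize}
\item Base case. The elements of \(\expand{\W^0}\) with non-empty knowledge are of the form \((\target, K)\). These lie in \(\final\), so they are trivially winning.
\item Inductive step. Take \((v,K) \in \expand{\W^{i+1}}\) with \(K \neq \emptyset\). If \((v,K) \in \expand{\W^i}\), the induction hypothesis applies. Otherwise, by the identity above and downward closure, \((v,K) \in \Pred{\expand{\W^i}}\). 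So there is \(a \in \En(v)\) such that \((v', K \cap \opponents{v,a,v'}) \in \expand{\W^i}\) for every \(v'\).
\item Eve then plays \((v,K) \to (v,K,a)\). Adam can only move to successors \((v', K \cap \opponents{v,a,v'})\) whose knowledge is non-empty. By induction each of these is winning, so Eve continues with the corresponding winning strategy.
\end{itemize}

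\emph{Completeness} (Eve wins implies \((v,K) \in \expand{\W^i}\) for some \(i\)). Here I use attractor ranks in \(\kGame\).
\begin{itemize}
\item The part of \(\kGame\) reachable from \((v,K)\) is finite. Indeed, every knowledge set reached has the form \(K \cap \opponents{v_1,a_1,v_2} \cap \dotsb\), and there are only finitely many distinct sets \(\opponents{\cdot,\cdot,\cdot}\).
\item Hence the usual finite attractor iteration \(\mathsf{Attr}^0 = \final\), \(\mathsf{Attr}^{j+1} = \mathsf{Attr}^j \cup \mathrm{CPre}(\mathsf{Attr}^j)\), restricted to this finite subgame, captures every winning vertex in it at a finite rank.
\item I will show by induction on \(j\) that every Eve vertex \((v,K) \in \mathsf{Attr}^j\) satisfies \((v,K) \in \expand{\W^j}\).
\item Base case: \((\target,K) \parOrder (\target,\natnonneg) \in \W^0\).
\item Inductive step: let \((v,K)\) enter at rank \(j+1\) with \(v \neq \target\). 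Then there is \(a \in \En(v)\) such that every successor \((v', K \cap \opponents{v,a,v'})\) with non-empty knowledge lies in \(\mathsf{Attr}^j\), hence in \(\expand{\W^j}\) by induction.
\item For the \(v'\) with \(K \cap \opponents{v,a,v'} = \emptyset\), the pair \((v',\emptyset)\) is in \(\expand{\W^0}\). This holds via \((v',\emptyset) \in \W^0\) when \(v' \neq \target\), and via \((\target,\emptyset) \parOrder (\target,\natnonneg)\) otherwise. By \Cref{cor:W:WNonDecreasing}, \(\expand{\W^0} \subseteq \expand{\W^j}\).
\item Therefore \(K \in \KPred{v,a}{\expand{\W^j}}\), so \((v,K) \in \Pred{\expand{\W^j}} \subseteq \expand{\W^{j+1}}\).
\item Monotonicity (\Cref{cor:W:WNonDecreasing}) then lets me take \(i \geq 1\), as the statement requires.
\end{itemize}

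The main obstacle I expect is justifying that a winning vertex has a finite attractor rank in a game with infinitely many vertices. I will handle this through the finiteness of the reachable subgame from \((v,K)\), which is argued exactly as the paper's remark about reachability from \((v,\natnonneg)\). Together with determinacy of finite reachability games, this shows that Eve's winning from \((v,K)\) means membership in a finite-rank attractor.
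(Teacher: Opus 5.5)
Your proposal is correct and takes essentially the same route as the paper. Soundness goes by induction on the index of \(\expand{\W^i}\), with Eve playing the action that witnesses membership in \(\Pred{\expand{\W^i}}\); completeness is the standard attractor-rank argument, made rigorous via finiteness of the part of \(\kGame\) reachable from \((v,K)\) (the only detail left implicit is that Adam always has a move from \((v,K,a)\) when \(K \neq \emptyset\), which follows from the completeness assumption on the arena).
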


\begin{proof}[Sketch]
  The proof follows the same line as the correctness proof of the
  fixed-point computation of the winning region in standard two-player
  turn-based reachability games.
  From every \((v, K) \in \expand{\W^i}\), one can construct a winning
  strategy by induction on the least index \(j\) such that
  \((v, K) \in \expand{\W^j}\): select an action such
  that any successor \((v', K')\) satisfies
  \((v', K') \in \expand{\W^{j-1}}\). Conversely, if Eve has a winning
  strategy from \((v, K)\), she can ensure to progress towards a
  target vertex.  Since any target vertex necessarily belongs to the
  winning region, we can show that \((v, K)\) is also in the winning
  region.
\end{proof}

From \Cref{thm:knowledge:strategy,thm:symbolic:strategy:region}, we
conclude that
\(\bigcup_{i \in \natnonneg} \expand{\W^i}\) characterizes the winning
region for Eve in the concurrent parameterized game \(\game\).

\begin{restatable}{corollary}{fixpointcorollary}
  Eve has a winning strategy from a vertex \(v_0\) in \(\game\) if and only if
  there exists $i \in \nat$ such that \((v_0, \natnonneg) \in
  \W^i\). Furthermore, if $\W^\infty$ is defined, then Eve has a
  winning strategy from \(v_0\) in \(\game\) if and only if
  \((v_0, \natnonneg) \in \W^\infty\).
\end{restatable}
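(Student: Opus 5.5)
The first equivalence follows by chaining \Cref{thm:knowledge:strategy} with \Cref{thm:symbolic:strategy:region}, and then passing from the downward closure $\expand{\W^i}$ to the antichain $\W^i$ itself, using that $(v_0,\natnonneg)$ is the top element of its fibre.

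By \Cref{thm:knowledge:strategy}, Eve wins from $v_0$ in $\game$ if and only if she wins from $(v_0,\natnonneg)$ in $\kGame$. Since $\natnonneg \neq \emptyset$, \Cref{thm:symbolic:strategy:region} applies. It says this holds if and only if $(v_0,\natnonneg) \in \expand{\W^i}$ for some $i \in \natnonneg$. The index set differs only trivially from the statement ($i \in \nat$ versus $i \in \natnonneg$), because $\expand{\W^0} \subseteq \expand{\W^1}$ by \Cref{cor:W:WNonDecreasing}.

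It remains to show $(v_0,\natnonneg) \in \expand{\W^i} \iff (v_0,\natnonneg) \in \W^i$.
\begin{itemize}
  \item The direction from right to left is immediate, since $L \subseteq \expand{L}$.
  \item For the other direction, recall that $\W^i = \reduce{\W^i}$ and $\expand{\W^i} = \expand{\reduce{\W^i}}$. Suppose $(v_0,\natnonneg) \in \expand{\W^i} = \W^i \cup \Dom(\W^i)$. If it were dominated, some $(v_0,K') \in \W^i$ would satisfy $\natnonneg \subsetneq K'$. This is impossible, since $K' \subseteq \natnonneg$. Hence $(v_0,\natnonneg) \in \W^i$.
\end{itemize}

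For the second claim, I interpret ``$\W^\infty$ is defined'' as the sequence $(\W^i)$ being eventually constant, equal to $\W^\infty$ from some index $N$ on.
\begin{itemize}
  \item If $(v_0,\natnonneg) \in \W^\infty$, then it lies in $\W^N$, and the first part gives a winning strategy.
  \item Conversely, suppose $(v_0,\natnonneg) \in \W^i$ for some $i$. By \Cref{cor:W:WNonDecreasing}, it lies in $\expand{\W^j}$ for every $j \geq i$, in particular for $j = \max(i,N)$. The maximality argument above then shows $(v_0,\natnonneg) \in \W^j = \W^\infty$.
\end{itemize}

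The only delicate point is the maximality argument: that membership of the top knowledge $\natnonneg$ in a downward closure forces membership in the antichain itself. Everything else is bookkeeping with the earlier results.
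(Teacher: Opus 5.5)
Your proof is correct and follows the route the paper indicates. You chain \Cref{thm:knowledge:strategy} with \Cref{thm:symbolic:strategy:region}, then use that $(v_0,\natnonneg)$ is the top element of its fibre and so cannot be dominated, which turns membership in $\expand{\W^i}$ into membership in $\W^i$; \Cref{cor:W:WNonDecreasing} correctly handles both the index-$0$ case and the passage to $\W^\infty$.
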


\subsection{An alternative fixed-point approach}\label{subsec:alternative}
The previous fixed-point characterization cannot be immediately turned
into an algorithm, because, even if the parameterized arena is finite,
the operator $\KPred{\cdot,\cdot}{\cdot}$ collects a priori
unboundedly many sets $K$. Therefore, it is not obvious how to use
it to compute the winning region in the knowledge game. We thus
propose an alternative approach, with an operational viewpoint, to
compute \(\W^{i+1}\) from \(\W^i\).  Let \(v \in \vertices\) and
\(a \in \En(v)\).  Before providing the definition, we give the
intuition behind it:
\begin{itemize}
  \item
    Fix, for each \(v' \in V\), a knowledge set \(K_{v'}\) such that
    \((v', K_{v'}) \in \W^i\).
  \item We want to focus on the vertices \(v'\) that can be reached from
    \(v\) by taking an edge labeled with \(a\) and some constraint on
    the number of opponents.  Since the arena may be non-deterministic,
    there can be multiple such vertices.  Hence, we iterate over each
    alternative, by taking every possible subset \(V'\) of \(V\),
    representing the vertices that we want to reach.

  \item We first compute the intersection of every \(K_{v'}\), with
    \(v' \in V'\).  This yields a set of numbers of opponents for which
    we are sure to have a winning strategy from each \(v'\).  Indeed,
    for every \(v' \in V'\), \((v', K_{v'}) \in \W^i\).  As
    \(\bigcap_{v' \in V'} K_{v'}\) is a subset of each \(K_{v'}\), we have
    \((v', \bigcap_{v' \in V'} K_{v'}) \in \expand{\W^i}\).

  \item It remains to check that we can guarantee to \emph{only} reach
    vertices in \( V'\), while avoiding any vertex \(v''\) with
    \(v'' \notin V'\).  That is, we want to reach
    $\opponents{v,a,V'}$, as defined in~\Cref{nablaI}
    page~\pageref{nablaI}.

  \item We take the intersection of
    \(\bigcap_{v' \in V'} K_{v'}\) and \(\opponents{v,a,V'}\), to obtain
    the numbers of opponents for which we know that we have a winning
    strategy from \(v\).

  \item We repeat this procedure for every possible combination of sets
    \((v', K_{v'}) \in \W^i\), by varying the witnesses
    ${(K_{v'})}_{v' \in V}$.
    Let us emphasize that we do not take the union over the
    possible \(K_{v'}\). Instead, we construct a set containing all
    the sets yielded by the above procedure.
\end{itemize}

\begin{definition}\label{def:symbolic-alt}
  For $(v, a) \in \vertices \times \alphabet$ and $S \subseteq
  \verticesE$, let
  \(\KPredAlt{v,a}{S} \) be the operator on \(\subsets{\natnonneg}\),
  and let \(\PredAlt{S}\) be the operator on \(\subsets{\verticesE}\)
  defined as
  \begin{gather*}
    \KPredAlt{v,a}{S} = \left\{ \bigcup_{V' \subseteq V}
      \left(\opponents{v,a,V'} \cap \bigcap_{v' \in V'} K_{v'}\right)
      \ \middle\vert\
    \forall v' : (v', K_{v'}) \in S \right\}
    \\
    \PredAlt{S} = \bigcup_{v\in \vertices} \bigcup_{a\in
    \En(v)}\{(v,K) \mid K \in \KPredAlt{v,a}{S}\} \enspace.
  \end{gather*}
  Finally, the sets \({(\WAlt^i)}_{i \in \nat} \subseteq \verticesE\)
  are defined inductively:
  \begin{equation*}
    {\displaystyle \WAlt^0} {\displaystyle = \{(\target, \natnonneg)\} \cup
    \{(v, \emptyset) \mid v \in \vertices \setminus \{\target\}\}}
    \qquad\text{and}\qquad
    {\displaystyle \forall i \geq 0 : \WAlt^{i+1}} {\displaystyle =
      \WAlt^i \lub
    \PredAlt{\WAlt^i}.}
  \end{equation*}
\end{definition}

For the \(\PredAlt{\cdot}\) operator, one can directly show termination.

\begin{restatable}{lemma}{predAltTermination}\label{lemma:PredAlt:termination}
  The sequence \({(\WAlt^i)}_{i \in \nat}\) stabilizes.
\end{restatable}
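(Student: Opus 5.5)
We will show that every knowledge set occurring in some \(\WAlt^i\) lies in one fixed finite family of subsets of \(\natnonneg\), and that \(\expand{\WAlt^i}\) is non-decreasing in \(i\). The sequence then evolves in a finite lattice without going back, so it must become constant.

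\emph{Step 1: a finite family of knowledge sets.} Let \(\mathcal{B}\) be the set of all \(\opponents{v,a,V'}\) for \(v \in \vertices\), \(a \in \En(v)\) and \(V' \subseteq \vertices\). This set is finite. Let \(\mathcal{C}\) be the closure of \(\mathcal{B} \cup \{\emptyset, \natnonneg\}\) under finite unions and intersections. Since it is generated by finitely many sets, \(\mathcal{C}\) is a finite Boolean-type lattice of subsets of \(\natnonneg\); a Boolean combination of \(m\) sets has at most \(2^{2^m}\) possible values.

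We prove by induction on \(i\) that every \((v,K) \in \WAlt^i\) has \(K \in \mathcal{C}\). For \(\WAlt^0\) this is immediate. For the inductive step, a new element \((v,K)\) is produced by \(\PredAlt{\WAlt^i}\), so
\[
K = \bigcup_{V'} \left(\opponents{v,a,V'} \cap \bigcap_{v' \in V'} K_{v'}\right),
\]
where each \(K_{v'}\) comes from \(\WAlt^i\) and hence lies in \(\mathcal{C}\) by the induction hypothesis. Closure of \(\mathcal{C}\) under unions and intersections gives \(K \in \mathcal{C}\). The join \(\lub\) only removes dominated elements, so it never introduces new sets.

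A subtlety: \(\PredAlt{}\) quantifies over witnesses \(K_{v'}\) with \((v',K_{v'}) \in S\) for \emph{every} \(v'\). Every vertex has at least one entry in \(\WAlt^0\), and Step 2 will show that every vertex keeps some entry, so such witnesses always exist. Moreover \(\PredAlt{\WAlt^i}\) is finite because \(\WAlt^i\) is finite. Consequently every \(\WAlt^i\) is a finite antichain of the finite poset \(\vertices \times \mathcal{C}\).

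\emph{Step 2: monotonicity.} \(\WAlt^i\) is finite, so \(\expand{\WAlt^i} = \expand{\reduce{\WAlt^i}}\), and the same holds for the finite set \(\PredAlt{\WAlt^i}\). By \Cref{prop:lub:reduceLub} and \Cref{prop:expand_is_expand_reduce},
\[
\expand{\WAlt^{i+1}} = \expand{\WAlt^i} \cup \expand{\PredAlt{\WAlt^i}} \supseteq \expand{\WAlt^i}.
\]
This mirrors the proof of \Cref{cor:W:WNonDecreasing}, but is simpler since finiteness replaces \Cref{lemma:W:welldefined}. In particular each vertex keeps some entry, which closes the subtlety left open in Step 1.

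\emph{Step 3: stabilization.} Restricted to \(\vertices \times \mathcal{C}\), the downward closures \(\expand{\WAlt^i} \cap (\vertices \times \mathcal{C})\) form a non-decreasing chain of subsets of a finite set, so they are eventually constant, say from index \(N\) on. Each \(\WAlt^i\) is the set of maximal elements of its closure and lies inside \(\vertices \times \mathcal{C}\), so it is determined by that restricted closure. Hence \(\WAlt^i = \WAlt^N\) for all \(i \ge N\).

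The main obstacle is Step 1: making sure that the knowledge sets produced by \(\PredAlt{}\) really stay inside the fixed finite family \(\mathcal{C}\), and handling the quantification over witnesses carefully. Once that is in place, the rest is a routine finite-lattice argument.
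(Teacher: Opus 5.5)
Your proof is correct and follows essentially the same route as the paper: you confine every knowledge set of the iterates to a finite family of subsets of $\natnonneg$ that is closed under the operations used by $\KPredAlt{v,a}{\cdot}$, and then conclude by finiteness. The differences are minor. You take the sets $\opponents{v,a,V'}$ themselves as generators, so closure under union and intersection suffices, whereas the paper starts from the $\opponents{v,a,v'}$ and also closes under set difference. You also prove explicitly that $\expand{\WAlt^i}$ is non-decreasing, a fact the paper's finiteness argument relies on only implicitly.
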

\begin{proof}
  Let \(\latticeK\) be the minimal set such that
  \begin{itemize}
    \item \(\natnonneg \in \latticeK\), \(\emptyset \in \latticeK\),
    \item for every \(v, v' \in \vertices\) and \(a \in \alphabet\),
      \(\opponents{v, a, v'} \in \latticeK\), and
    \item for every \(K, K' \in \latticeK\), one has
      \(K \cup K' \in \latticeK\), \(K \cap K' \in \latticeK\), and
      \(K \setminus K' \in \latticeK\),
  \end{itemize}
  and let \(\verticesEK = \{(v, K) \in \verticesE \mid K \in \latticeK\}\).
  One can show that \(\latticeK\) and \(\verticesEK\) are finite
  (using algebraic arguments and the fact that we have
  finitely many \enquote{atoms}).

  Whenever $S$ is a subset of $\verticesEK$, then
  $\KPredAlt{v,a}{S}$ belongs to $\latticeK$ and $\PredAlt{S}$ is included
  in $\verticesEK$. $\verticesEK$ is thus stable by the operator. Since
  $\WAlt^0 \subseteq \verticesEK$, the sets $\WAlt^i$ all belong to
  $\verticesEK$.  As $\latticeK$ is finite, the fixed-point
  computation of the sequence ${(\WAlt^i)}_{i \in \nat}$ will thus
  converge in finitely many steps.
\end{proof}
Note that proving that \(\KPred{v, a}{S}\) belongs to
\(\latticeK\) is not an easy feat, and will be obtained as a
consequence of \(\W\) and \(\WAlt\) computing the same sets. This
justifies the introduction of \(\KPredAlt{\cdot}{\cdot}\).

We can also show a monotonicity property with respect to
the preorder.

\begin{restatable}{lemma}{predAltMonotonicity}\label{lemma:PredAlt}
  For any $(v,a) \in \vertices \times \alphabet$ and two sets
  $S,S' \subseteq \verticesE$ such that $S \parOrdersim S'$,
  \begin{itemize}
    \item
      \(\KPredAlt{v,a}{S} \subseteq \expand{(\KPredAlt{v,a}{S'})}\); and
    \item
      \(\PredAlt{S} \parOrdersim \PredAlt{S'}\).
  \end{itemize}
\end{restatable}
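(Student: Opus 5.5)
The plan is to prove the first item directly by choosing witnesses componentwise, and then to derive the second item from it. Note that when \(v\) and \(a\) are fixed, the expression \(\bigcup_{V' \subseteq V} (\opponents{v,a,V'} \cap \bigcap_{v' \in V'} K_{v'})\) is monotone in each \(K_{v'}\) with respect to \(\subseteq\): each \(K_{v'}\) appears only positively, inside intersections and a union, since the sets \(\opponents{v,a,V'}\) do not depend on the witnesses.

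For the first item, let \(K \in \KPredAlt{v,a}{S}\). By definition there is a family \({(K_{v'})}_{v' \in V}\) with \((v', K_{v'}) \in S\) for every \(v'\) such that \(K = \bigcup_{V' \subseteq V} (\opponents{v,a,V'} \cap \bigcap_{v' \in V'} K_{v'})\). Since \(S \parOrdersim S'\), for each \(v'\) there is some \((w, K'_{v'}) \in S'\) with \((v', K_{v'}) \parOrder (w, K'_{v'})\). By definition of \(\parOrder\) on \(\verticesE\), \(w = v'\) and \(K_{v'} \subseteq K'_{v'}\). The family \({(K'_{v'})}_{v'}\) is then a legal witness family for \(S'\), because \(V\) is finite and we may choose one witness per vertex. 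Hence \(K' = \bigcup_{V'} (\opponents{v,a,V'} \cap \bigcap_{v' \in V'} K'_{v'}) \in \KPredAlt{v,a}{S'}\). By the monotonicity remark, \(K \subseteq K'\). Therefore \(K\) is either equal to \(K'\) or dominated by it, so \(K \in \expand{(\KPredAlt{v,a}{S'})}\).

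For the second item, take \((v,K) \in \PredAlt{S}\). Then there is some \(a \in \En(v)\) with \(K \in \KPredAlt{v,a}{S}\). The first item provides \(K' \in \KPredAlt{v,a}{S'}\) with \(K \subseteq K'\). Since \(a\) is still enabled at \(v\), we get \((v,K') \in \PredAlt{S'}\) and \((v,K) \parOrder (v,K')\). This is exactly \(\PredAlt{S} \parOrdersim \PredAlt{S'}\).

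The main subtlety is the existence of a witness \(K'_{v'}\) for every \(v' \in V\) in \(S'\). This is guaranteed only because a witness \(K_{v'}\) already exists in \(S\) and \(\parOrdersim\) transfers it. If \(S\) has no element at some vertex, then \(\KPredAlt{v,a}{S}\) is empty and the first item holds vacuously. A second point to watch is that \(\expand{}\) is taken in \(\subsets{\natnonneg}\) here, and the statement allows \(K = K'\); this is covered because \(\expand{L} = L \cup \Dom(L)\).
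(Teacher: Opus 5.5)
Your proof is correct and follows essentially the same route as the paper. You transfer each witness $(v',K_{v'})\in S$ to a larger witness $(v',K'_{v'})\in S'$ via $S \parOrdersim S'$, and use that $\bigcup_{V'\subseteq V}\bigl(\opponents{v,a,V'}\cap\bigcap_{v'\in V'}K_{v'}\bigr)$ is monotone in the witnesses; the second item then follows from the first, since $a\in\En(v)$ is unchanged.
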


\begin{example}\label{ex:WAlt}
  We illustrate this alternative approach on the game
  \(\game = (\arena, t)\) of
  \Cref{fig:arena:nondeterministic}. Initially
  $\WAlt^0 = \{(s, \emptyset), (v, \emptyset), (x_1, \emptyset), (x_2,
  \emptyset), (y_1, \emptyset), (y_2, \emptyset), (t, \natnonneg)\}$.

  We focus on computing \(\KPredAlt{v, c}{\WAlt^0}\).
  Given \(\WAlt^0\), there is a unique choice
  for \(K_{v'}\) for each \(v' \in \vertices\). Now, since the only
  successors of $v$ by a $c$-edge are $s$ and
  $t$, for every \(V' \subseteq \vertices\),
  if \(\opponents{v, c, V'}\) is not empty, then it must be that
  $V' \subseteq \{s,t\}$.
  When \(s \in V'\), we immediately conclude that
  \(\bigcap_{v' \in V'} K_{v'} = \emptyset\).  It thus remains to
  treat \(V' = \{t\}\), which yields
  \(\opponents{v, c, t} \cap K_t = \natsubs{>2}\).  Hence,
  \(\KPredAlt{v, c}{\WAlt^0} = \{\emptyset \cup \natsubs{>2}\} =
  \{\natsubs{>2}\}\).
  Analogously, one can compute
  \(\KPredAlt{y_1, b}{\WAlt^0} = \KPredAlt{y_2, b}{\WAlt^0} =
  \{\natnonneg\}\). So,
  \[
    \WAlt^1 = \{(s, \emptyset), (v, \natsubs{>2}), (x_1, \emptyset),
      (x_2, \emptyset),
    (y_1, \natnonneg), (y_2, \natnonneg), (t, \natnonneg)\}.
  \]
  Similarly, we obtain that
  \(\KPredAlt{x_1, b}{\WAlt^1} = \{\natsubs{\neq 2}\}\) and
  \(\KPredAlt{x_2, b}{\WAlt^1} = \{\natsubs{\neq 1}\}\).
  Thus,
  \[
    \WAlt^2 = \{(s, \emptyset), (v, \natsubs{>2}), (x_1, \natsubs{\neq
      2}), (x_2, \natsubs{\neq 1}), (y_1, \natnonneg), (y_2,
    \natnonneg), (t, \natnonneg)\}.
  \]
  Observe that $x_1$ and $x_2$ are the only targets of an edge labeled $a$
  from $v$. To compute
  \(\KPredAlt{v, a}{\WAlt^2}\), we focus on  \(\{x_1\}, \{x_2\}\),
  and \(\{x_1, x_2\}\).
  Due to the non-determinism, we are not sure to land exactly in
  \(x_1\) or in \(x_2\):
  \(\opponents{v, a, \{x_1\}} = \opponents{v, a, x_1}
    \setminus \bigcup_{v'
    \neq x_1} \opponents{v, a, v'} = 
  \emptyset\).  Likewise, \(\opponents{v,
  a, \{x_2\}} = \emptyset\).
  Finally, \(\opponents{v, a, \{x_1, x_2\}} = \natnonneg\), and
  \(\bigcap_{v' \in \{x_1,x_2\}} K_{v'} = \nat_{\neq 1} \cap \nat_{\neq 2}
  = \nat_{> 2}\).
  Thus, \(\KPredAlt{v, a}{\WAlt^2} = \nat_{> 2}\), but \((v, \nat_{>
  2})\) is already in \(\WAlt^2\). So, \(\WAlt^3 = \WAlt^2\).

  Observe that, for every index \(i\), \(\WAlt^i = \W^i\) (with
  \(\W^i\) computed in \Cref{ex:WK}). This is not a coincidence, and
  we prove it in the next section.
\end{example}

\subsection{Two fixed-point
computations?}\label{subsec:twofixedpointcomputations}
As we shall see, the two previous approaches contain exactly the
same sets.

\begin{restatable}{theorem}{fixpointequality}\label{thm:same-fixed-point}
  For all \(i \in \nat\), \(\W^i = \WAlt^i\). \\ The sequences
  ${(\W^i)}_{i \in \nat}$ and ${(\WAlt^i)}_{i \in \nat}$ stabilize, and
  \( \W^\infty = \WAlt^\infty\).
\end{restatable}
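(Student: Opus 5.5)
We argue by induction on \(i\) that \(\W^i = \WAlt^i\). The base case holds by definition. For the inductive step, assume \(\W^i = \WAlt^i =: A\), an antichain with \(A = \reduce{A}\) and \(\expand{A} = \expand{\reduce{A}}\). By \Cref{lemma:reduce}, and since \(\W^{i+1} = A \lub \Pred{\expand{A}}\) and \(\WAlt^{i+1} = A \lub \PredAlt{A}\) are both reduced, it suffices to show
\[
\expand{(A \cup \Pred{\expand{A}})} = \expand{(A \cup \PredAlt{A})},
\]
and in particular that both sets have the property \(\expand{L}=\expand{\reduce{L}}\). For \(\Pred{}\) this property is \Cref{lemma:W:welldefined}. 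For \(\PredAlt{}\) it follows once we know \(\PredAlt{A} \subseteq \verticesEK\), which is finite, and we can invoke \Cref{lemma:PredAlt:termination}. The theorem thus reduces to showing, for each \(v\) and \(a \in \En(v)\), the two inclusions \(\KPredAlt{v,a}{A} \subseteq \KPred{v,a}{\expand{A}}\) and \(\KPred{v,a}{\expand{A}} \subseteq \expand{(\KPredAlt{v,a}{A})}\).

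\emph{First inclusion.} Take witnesses \((v',K_{v'}) \in A\) and set \(K = \bigcup_{V'}(\opponents{v,a,V'} \cap \bigcap_{u\in V'} K_u)\). Fix \(v'\). We claim \(K \cap \opponents{v,a,v'} \subseteq K_{v'}\). Indeed, any \(k\) in this intersection lies in a unique \(\opponents{v,a,V'}\), namely the one with \(V' = \{u \mid k \in \opponents{v,a,u}\}\). This \(V'\) contains \(v'\), so \(k \in K_{v'}\). Hence \((v', K \cap \opponents{v,a,v'}) \in \expand{A}\), and therefore \(K \in \KPred{v,a}{\expand{A}}\).

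\emph{Second inclusion.} Let \(K \in \KPred{v,a}{\expand{A}}\). For each \(v'\) we have \((v', K\cap \opponents{v,a,v'}) \in \expand{A}\). Since \(A\) is reduced with \(\expand{A}=\expand{\reduce{A}}\), we can pick \((v',K_{v'}) \in A\) with \(K \cap \opponents{v,a,v'} \subseteq K_{v'}\). With these witnesses, \Cref{K:complete} gives \(K = \bigcup_{V'} K \cap \opponents{v,a,V'}\). For \(k \in K \cap \opponents{v,a,V'}\) and any \(u \in V'\), we have \(k \in K \cap \opponents{v,a,u} \subseteq K_u\). So \(K\) is contained in the element of \(\KPredAlt{v,a}{A}\) built from these witnesses, as required.

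Combining the two inclusions with \(v\)-tagging yields \(\PredAlt{A} \parOrdersim \Pred{\expand{A}}\) and \(\Pred{\expand{A}} \parOrdersim \PredAlt{A}\). Adding \(A\) to both sides and applying \Cref{lemma:reduce} (using \Cref{prop:expand_is_expand_reduce} item 6 to handle the unions) gives \(\W^{i+1} = \WAlt^{i+1}\).

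For stabilization, \Cref{lemma:PredAlt:termination} says \(\WAlt^i\) is eventually constant. By the equality just proved, \(\W^i\) is eventually constant as well, with the same limit, so \(\W^\infty = \WAlt^\infty\).

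The main obstacle is the bookkeeping in the inductive step, specifically confirming that both unions satisfy \(\expand{L}=\expand{\reduce{L}}\), so that \Cref{lemma:reduce} applies. On the \(\Pred{}\) side this is \Cref{lemma:W:welldefined}. On the \(\PredAlt{}\) side it is finiteness via \(\latticeK\), which requires checking that \(A \subseteq \verticesEK\) by a parallel induction.
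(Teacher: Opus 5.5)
Your proposal is correct and follows essentially the paper's argument. Like the paper, you argue by induction on \(i\) using the two inclusions relating \(\KPred{v,a}{\cdot}\) and \(\KPredAlt{v,a}{\cdot}\), with the completeness decomposition~\eqref{K:complete} doing the work, and then transfer stabilization from \(\WAlt\) to \(\W\) via \Cref{lemma:PredAlt:termination}. The only difference is organizational: the paper proves \(\KPred{v,a}{S}\subseteq\KPredAlt{v,a}{S}\) for arbitrary \(S\) and then uses the monotonicity \Cref{lemma:PredAlt} to replace \(\PredAlt{\expand{\WAlt^i}}\) by \(\PredAlt{\WAlt^i}\), whereas you pick the witnesses directly among the maximal elements of \(A\), which fuses those two steps into one.
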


\begin{proof}[Sketch]
  We give here the main ingredients of the proof, starting with a strong
  relationship between
  $\KPred{\cdot,\cdot}{\cdot}$ and $\KPredAlt{\cdot,\cdot}{\cdot}$:
  for every \(v \in \vertices\), \(a \in \En(v)\) and
  $S \subseteq \verticesE$, we have
  \(\KPred{v, a}{S} \subseteq \KPredAlt{v, a}{S}\)
  and
  \(\KPredAlt{v,a}{S} \subseteq \KPred{v,a}{\expand{S}}\).

  Notice that the second inclusion requires taking the downward
  closure of $S$. Consider for instance an edge
  $v \xrightarrow{a,[2,5]} t$ with the assumption that
  $(t,\natnonneg) \in S$. Then $K_t = \natnonneg$ and
  $\opponents{v,a,t} \cap K_t = [2,5]$, whereas $(t,[2,5]) \notin
  S$. This also justifies the downward closure in the definition of
  ${(\W^i)}_{i \in \nat}$.

  If \(S\) is downward-closed, we derive
  \(\KPred{v, a}{S} = \KPredAlt{v, a}{S}\).
  From there, it should be clear that
  \(\Pred{S} = \PredAlt{S}\).

  Clearly, $\WAlt^0 = \W^0$.  Unfortunately, \(\W^0\)
  is not downward-closed: it contains
  $(t,\natnonneg)$ and none of its dominated elements. However,
  we still manage to prove by induction on $i$ that for every
  $i\in \nat$, $\WAlt^i = \W^i$:
  \(\WAlt^{i+1} = \WAlt^i \lub \PredAlt{\WAlt^i}
    = \W^i \lub \PredAlt{\expand{\WAlt^i}}
    =\W^i \lub \Pred{\expand{\W^i}}
  = \W^{i+1}\).

  By \Cref{lemma:PredAlt:termination}, the sequence
  \({(\WAlt^i)}_{i \in \nat}\) stabilizes, and so does
  \({(\W^i)}_{i \in \nat}\) thanks to the equality.
\end{proof}

Combining Theorems~\ref{thm:symbolic:strategy:region}
and~\ref{thm:same-fixed-point} we obtain our final result:
\begin{theorem}\label{thm:main}
  The antichains $\W^\infty$ and $\WAlt^\infty$ describe
  the maximal elements of the winning region for Eve in $\kGame$.
\end{theorem}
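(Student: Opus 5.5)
The plan is to combine \Cref{thm:same-fixed-point} with \Cref{thm:symbolic:strategy:region}, and then check that the reduced form of the winning region is exactly the stabilized antichain. Let $\mathcal{W}$ denote the set of Eve vertices $(v,K)$ from which Eve wins in $\kGame$. By \Cref{thm:same-fixed-point}, both sequences stabilize: there is $N$ with $\W^i = \WAlt^i = \W^\infty = \WAlt^\infty$ for all $i \geq N$. By \Cref{cor:W:WNonDecreasing}, the chain $\expand{\W^i}$ is non-decreasing. Hence $\bigcup_{i} \expand{\W^i} = \expand{\W^N} = \expand{\W^\infty}$.

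Next I identify $\expand{\W^\infty}$ with $\mathcal{W}$. For $K \neq \emptyset$ this is exactly \Cref{thm:symbolic:strategy:region}. For $K = \emptyset$ there are two checks to make. First, every $(v,\emptyset)$ lies in $\expand{\W^0} \subseteq \expand{\W^\infty}$, since $\W^0$ contains $(v,\emptyset)$ for $v \neq \target$ and $(\target,\natnonneg) \sqsupseteq (\target,\emptyset)$. Second, I must decide the convention for $(v,\emptyset)$ in $\kGame$. Such vertices are unreachable from $(v',\natnonneg)$, since edges are only created when the intersection is non-empty. They are dead ends or trivially winning, depending on convention, and they are dominated by any non-empty knowledge at the same vertex in the antichain. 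So I will state the theorem as describing the winning region restricted to non-empty knowledge sets, together with these harmless bottom elements. This caveat about the empty set is the step I expect to need the most care, mainly in phrasing rather than mathematics.

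Then I show $\W^\infty = \reduce{\mathcal{W}}$, i.e.\ that $\W^\infty$ is exactly the set of maximal elements of the winning region. Since $\W^\infty = \reduce{\W^\infty}$ (each $\W^i$ is a join, hence reduced), \Cref{prop:expand_is_expand_reduce} (item 4) gives
\[
\reduce{\mathcal{W}} = \reduce{\expand{\W^\infty}} = \reduce{\W^\infty} = \W^\infty .
\]
It remains to note that the maximal elements of $\mathcal{W}$ actually represent it, i.e.\ $\expand{\mathcal{W}} = \expand{\reduce{\mathcal{W}}}$. This holds because $\W^\infty$ is a finite antichain lying inside $\verticesEK$ by \Cref{lemma:PredAlt:termination}, so every element of $\mathcal{W} = \expand{\W^\infty}$ lies below some element of $\W^\infty$. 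Finally, $\WAlt^\infty = \W^\infty$ by \Cref{thm:same-fixed-point}, which concludes.

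As a consequence, Eve wins from $v$ in $\game$ if and only if $(v,\natnonneg) \in \W^\infty$, using \Cref{thm:knowledge:strategy}.
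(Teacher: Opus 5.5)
Your proof is correct and follows the paper's own route: the paper obtains the theorem by combining \Cref{thm:symbolic:strategy:region} (which identifies $\bigcup_i \expand{\W^i}$ with the winning region on non-empty knowledge sets) with \Cref{thm:same-fixed-point} (which gives stabilization and $\W^\infty = \WAlt^\infty$), and your treatment of the $(v,\emptyset)$ elements and of $\reduce{\expand{\W^\infty}} = \W^\infty$ just makes explicit what the paper leaves implicit. One minor remark: you do not need finiteness or $\verticesEK$ to get $\expand{\mathcal{W}} = \expand{\reduce{\mathcal{W}}}$, because $\mathcal{W} = \expand{\W^\infty}$ already places every element below some element of $\W^\infty$ by the definition of $\expand{\cdot}$.
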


\section{Experimental evaluation}\label{sec:experiments}

We implemented our new algorithms \(\WAlt\) and \(\W\) based on
antichains, and the DFS algorithm of~\cite{BertrandBM19} in an
open-source C++ tool called
\textsf{ParaGraphs}\footnote{\url{https://gitlab.inria.fr/x-staquet/paragraphs}}.
It currently supports constraints given as finite unions of intervals,
but is designed in a modular way and makes it easy to define new
types (for instance semi-linear sets), as well as specific
implementations of the algorithms for these types.

\paragraph{Implementation of \(\W\).}

Recall that \(\KPred{\cdot, \cdot}{\cdot}\) requires to test every subset of
\(\natnonneg\), \ie, infinitely many sets.
However, as introduced in the proof of \Cref{lemma:PredAlt:termination}, any
set produced by \(\KPredAlt{\cdot, \cdot}{\cdot}\) belongs to a finite
subset \(\latticeK\) of \(\subsets{\natnonneg}\), which is a lattice, that
can be built from the constraints of the arena.
Moreover, by \Cref{thm:main}, for any \(i\), \(\W^i = \WAlt^i\).
Thus, it suffices to only consider sets that belong to \(\latticeK\).
That is, we can test finitely many sets.

In \textsf{ParaGraphs}, \(\latticeK\) is constructed from the constraints
appearing on the arena such that we can iterate over the elements, starting from
\(\natnonneg\), in an efficient way. This renders the implementation of
\(\W\) straightforward: to compute \(\KPred{v, a}{S}\), we iterate over
\(\latticeK\) and test whether the current element \(K\) is such that
\((v', K \cap \opponents{v, a, v'}) \notin S\), in which case we consider the
children of \(K\) in the lattice. Otherwise, \(K\) is a maximal element of the
set to return, and it is useless to explore the sublattice rooted at \(K\).
Furthermore, we keep track of which nodes can be ignored (because
they are already covered by a maximal element), in order to
avoid useless computations. Finally, we highlight that there is no
need to compute \(\expand{\W^i}\) to obtain \(\W^{i+1}\).
Indeed, we are only interested in knowing whether
\((v', K \cap \opponents{v, a, v'}) \in S\), which can be performed
directly on the maximal elements of the antichain.

\paragraph{Implementation of \(\WAlt\).}

Given its definition, the implementation of \(\WAlt\) is even more
direct, although heavily
optimized.
For instance, an antichain for \(\WAlt\) is represented as a
dynamically sized array;
to reduce time spent allocating and freeing it, the same array is reused
at each call of \(\KPredAlt{\cdot, \cdot}{\cdot}\).
Importantly, \(\WAlt\) does not require the computation of \(\latticeK\).

\paragraph{Implementation of the DFS algorithm.}

The DFS algorithm in~\cite{BertrandBM19} does not compute the
whole winning region, but determines whether Eve has a winning
strategy in \(\kGame\) from a given initial vertex
\((v_0, \natnonneg)\).  It performs a decomposition of the knowledge
game into sub-games, each of which is defined for a pair \((v, K)\)
and is a restriction of \(\kGame\) to vertices \((v', K, a)\) and
\((v', K')\) that are reachable from \((v, K)\) via vertices with exactly the
same knowledge set \(K\).  That is, except in the \enquote{exit
vertices}, the knowledge set is not refined within a sub-game.
Each sub-game can be solved in polynomial time in the
size of \(\game\). The algorithm tags each node of an exponential-size
tree \(\tree\) of sub-games with \Win or \Lose in a DFS fashion. This
DFS exploration is implemented in
\textsf{ParaGraphs}, together with the resolution of each sub-game by
standard (optimized) attractor computation.

\paragraph{Research statements}

Our experimental evaluation aims to address the following research
questions:
\begin{description}
  \item[Q1]
    Between \(\W\) and \(\WAlt\), which approach is the most efficient?
  \item[Q2]
    Are there arenas on which the antichain-based methods are more
    efficient than the DFS algorithm?
\end{description}

\subsection{Experimental results}

The experiments were carried out on two types of parameterized arenas:
on the one hand, scalable games derived from an illustrative example
in~\cite{BertrandBM19}, and on the other hand games obtained from
quantified Boolean formulas (QBF). Benchmarks were executed on a Dell
Precision 7780, running Debian 13.  \textsf{ParaGraphs} was compiled
with g++ 14.2.0.

\paragraph{Synthetic arenas.} To exercise
our tool \textsf{ParaGraphs}, we considered four families of scalable
instances, to cover a wide range of
cases. Each family has a parametric size: the parameter
is $n$, and the arena size (precisely the number of vertices and of
different constraints) is linear in $n$; some are deterministic
(D), some non-deterministic (ND), some are winning for Eve (W), some
are losing for Eve (NW).

\Cref{tab:experiments:solve:games} compares the running time of the three
algorithms, with and
without the time needed to construct the lattice for \(\W\).
To be more specific, the instances with
\(n = 11\) (resp.\ \(n = 12\)) generate a lattice consisting of 4096
(8192) sets that is constructed in about 15s (70s).
Computing larger
lattices exceeds our timeout of five
minutes.  We again highlight that the algorithm from~\cite{BertrandBM19}
solely determines whether Eve has a winning strategy, while our new
symbolic algorithms compute the whole winning region, thus returning a
more complete information.

\begin{table}[t]
  \caption{Time and memory taken by the three algorithms.  Timeout is
    set to five
    minutes. For \(\W\), we list time with and without the construction of the
    lattice, which counts towards the time limit.  The best time and
    memory consumption of each
  row is highlighted in bold.}\label{tab:experiments:solve:games}
  \centering
  \renewcommand{\multirowsetup}{\raggedleft}
  \renewcommand{\arraystretch}{0.8}
  \footnotesize
  \begin{tabular}{@{}
      l@{}r
      @{\hspace{10pt}}r@{\hspace{10pt}}r
      @{\hspace{10pt}}r
      @{\hspace{10pt}}r
      @{\hspace{10pt}}r@{\hspace{10pt}}r@{\hspace{10pt}}r
    @{}}
    \toprule
    \multicolumn{2}{c}{Game} &
    \multicolumn{4}{c}{Time (ms)} &
    \multicolumn{3}{c}{Memory (kB)}
    \\
    \cmidrule(r){1-2}
    \cmidrule(rl){3-6}
    \cmidrule(l){7-9}
    Name & Value of \(n\) &
    \multicolumn{2}{c}{\(\W\)} &
    \multirow{2}*{\(\WAlt\)} &
    \multirow{2}*{DFS} &
    \multirow{2}*{\(\W\)} &
    \multirow{2}*{\(\WAlt\)} &
    \multirow{2}*{DFS}
    \\
    \cmidrule(rl){3-4}
    & & Only & Total
    \\
    \midrule
    \multirow[c]{6}{*}{D-NW-1} & 10 & 850 & 3784 &
    \bfseries 67 & 108704 &
    7005.60 & 5067.20 & \bfseries 4642.40 \\
    & 11 & 3581 & 17312 & \bfseries 246 & timeout &
    10365.60 & \bfseries 5744.80 & timeout \\
    & 12 & 16017 & 82165 & \bfseries 961 & timeout &
    17808.80 & \bfseries 7768.00 & timeout \\
    & 13 & timeout & timeout & \bfseries 3817 & timeout &
    timeout & \bfseries 11936.80 & timeout \\
    & 14 & timeout & timeout & \bfseries 15269 & timeout &
    timeout & \bfseries 20584.00 & timeout \\
    & 15 & timeout & timeout & \bfseries 66657 & timeout &
    timeout & \bfseries 40322.40 & timeout \\
    \midrule
    \multirow[c]{6}{*}{D-W-1} & 10 & 608 & 3581 & 68 &
    \bfseries <1 &
    7491.60 & 5102.40 & \bfseries 4735.60 \\
    & 11 & 2559 & 16696 & 250 & \bfseries <1 &
    11694.40 & 6116.40 & \bfseries 4686.80 \\
    & 12 & 11493 & 80720 & 975 & \bfseries <1 &
    19790.80 & 8058.00 & \bfseries 4708.00 \\
    & 13 & timeout & timeout & 3707 & \bfseries <1 &
    timeout & 12733.60 & \bfseries 4674.00 \\
    & 14 & timeout & timeout & 15105 & \bfseries <1 &
    timeout & 22339.60 & \bfseries 4738.80 \\
    & 15 & timeout & timeout & 66224 & \bfseries <1 &
    timeout & 43618.40 & \bfseries 4696.40 \\
    \midrule
    \multirow[c]{6}{*}{D-NW-2} & 10 & 211 & 3345 & 30 &
    \bfseries 2 &
    6391.20 & 4568.80 & \bfseries 4538.40 \\
    & 11 & 747 & 15574 & 85 & \bfseries 2  &
    8881.60 & 4804.80 & \bfseries 4716.00\\
    & 12 & 2720 & 74282 & 276 & \bfseries 3  &
    13774.00 & 5253.20 & \bfseries 4695.60\\
    & 13 & timeout & timeout & 945 & \bfseries 30 &
    timeout & 6324.00 & \bfseries 4541.60 \\
    & 14 & timeout & timeout & 3571 & \bfseries 41 &
    timeout & 8076.40 & \bfseries 4614.80 \\
    & 15 & timeout & timeout & 12425 & \bfseries 54 &
    timeout & 12889.60 & \bfseries 4616.00 \\
    \midrule
    \multirow[c]{6}{*}{D-W-2} & 10 & 473 & 3569 & 104 &
    \bfseries <1 &
    6517.20 & 4803.60 & \bfseries 4540.00 \\
    & 11 & 1764 & 16680 & 323 & \bfseries <1 &
    9087.60 & 5216.80 & \bfseries 4725.20 \\
    & 12 & 6668 & 78194 & 1065 & \bfseries 2 &
    14324.00 & 5982.40 & \bfseries 4673.20 \\
    & 13 & timeout & timeout & 3903 & \bfseries 13 &
    timeout & 7767.20 & \bfseries 4728.40 \\
    & 14 & timeout & timeout & 14157 & \bfseries 12 &
    timeout & 11820.40 & \bfseries 4718.80 \\
    & 15 & timeout & timeout & 53060 & \bfseries 13 &
    timeout & 19710.80 & \bfseries 4651.60 \\
    \midrule
    \multirow[c]{6}{*}{ND-NW} & 10 & 9 & 2935 &
    \bfseries <1 & 102195 &
    6830.00 & 4666.40 & \bfseries 4400.00 \\
    & 11 & 23 & 14230 & \bfseries 2 & timeout &
    10323.60 & \bfseries 4777.60 & timeout \\
    & 12 & 56 & 67943 & \bfseries 6 & timeout &
    16984.40 & \bfseries 5126.80 & timeout \\
    & 13 & timeout & timeout & \bfseries 9 & timeout &
    timeout & \bfseries 5778.00 & timeout \\
    & 14 & timeout & timeout & \bfseries 20 & timeout &
    timeout & \bfseries 7258.40 & timeout \\
    & 15 & timeout & timeout & \bfseries 40 & timeout &
    timeout & \bfseries 10236.80 & timeout \\
    \bottomrule
  \end{tabular}
\end{table}

Based on these results, we can answer our two research
questions for synthetic arenas. More specifically for {\bf Q1},
\(\WAlt\) is more efficient than \(\W\), as
observed by the number of timeouts. This difference majorly comes from
the construction of \(\latticeK\): computing the lattice in advance
already costs more time than the iterations of \(\WAlt\).

Still for synthetic arenas, the answer to {\bf Q2} is
\enquote{yes}, as, on D-NW-1, \(\WAlt\) greatly outperforms the DFS
algorithm of~\cite{BertrandBM19}.  Interestingly, the latter vastly
outperforms the former when Eve has a winning strategy (D-W-1 and
D-W-2) and for D-NW-2. For the two winning cases, this is simply due
to the fact that the tree exploration is much quicker:
since there is a winning strategy, we explore less sub-games than
compared to D-NW-1 and D-NW-2.  For instance, for \(n = 10\), 31
sub-games are considered for D-W-1, and 19728201 for D-NW-1. Moreover,
the structure of D-NW-2 already induces a small tree: only 135
sub-games are needed (and 34 for D-W-2). Both \(\W\) and \(\WAlt\)
need 22 iterations to compute the whole winning
region.\footnote{Recall that, for every \(i\), \(\WAlt^i = \W^i\), \ie, it is
expected that both algorithms need the same number of iterations.}
Finally, notice that \(\WAlt\) is the only algorithm that managed to
finish all instances. Thus, it seems to scale well to larger and more
complex arenas.

We measured the memory consumption of the algorithms,
which is also reported in \Cref{tab:experiments:solve:games}.
We observe that D-W-1 is
the arena that induces the highest memory consumption for every algorithm.
The fact that the antichain-based methods require more
space than the DFS algorithm is not surprising. Indeed, one may
need to store exponentially many maximal elements inside an
antichain, \ie, our new algorithms may require exponential space in
the worst case.

\begin{figure}[t]
  \centering
  \begin{tikzpicture}[
      /pgfplots/table/x expr={(\thisrow{Variables}==7 &&
      \thisrow{Clauses}>=60)?\thisrow{Clauses}:nan},
    ]
    \begin{axis}[
        footnotesize,
        width = 1\textwidth,
        height = 90pt,
        axis x line = bottom,
        axis y line = left,
        xlabel = {Number of clauses},
        xlabel near ticks,
        ylabel = {Time (ms)},
        ylabel near ticks,
        ylabel shift = {-5pt},
        ymajorgrids = true,
        xtick = data,
        enlarge y limits = 0.1,
        enlarge x limits = 0.1,
        nodes near coords={\pgfmathprintnumber[1000
        sep={\,}]\pgfplotspointmeta},
      ]
      \addplot+ [
        Red,
        only marks,
        mark=*,
      ]
      table [
        y expr = \thisrow{W time total (ms)},
        col sep = comma,
      ]
      {qbfSolving.csv}
      ;
      \addplot+ [
        Blue,
        only marks,
        mark=triangle*,
      ]
      table [
        y expr = \thisrow{WAlt time (ms)},
        col sep = comma,
      ]
      {qbfSolving.csv}
      ;
      \addplot+ [
        Indigo,
        only marks,
        mark=square*,
      ]
      table [
        y expr = \thisrow{Explicit time (ms)},
        col sep = comma,
      ]
      {qbfSolving.csv}
      ;
    \end{axis}
  \end{tikzpicture}
  \caption{Time taken by \(\W\), \(\WAlt\) on randomly generated QBF
    instances with seven variables. For each number of
    clauses, we generate 50 instances, and we compute the average,
    ignoring the runs that timed out. The timeout was set to five
    minutes. On those instances, the DFS
    algorithm from~\cite{BertrandBM19} systematically reached the
    timeout limit, while \(\W\) and \(\WAlt\)
    always succeed. Red circles are for \(\W\) and blue triangles for
    \(\WAlt\). Values are displayed  next to the
  symbol.}\label{fig:experiments:solve:qbf}

  \begin{tikzpicture}[
      /pgfplots/table/x expr={(\thisrow{Variables}==7 &&
      \thisrow{Clauses}>=60)?\thisrow{Clauses}:nan},
    ]
    \begin{axis}[
        footnotesize,
        width = 1\textwidth,
        height = 85pt,
        axis x line = bottom,
        axis y line = left,
        xlabel = {Number of clauses},
        xlabel near ticks,
        ylabel = {Memory (MB)},
        ylabel near ticks,
        ylabel shift = {-5pt},
        ymajorgrids = true,
        xtick = data,
        enlarge y limits = 0.1,
        enlarge x limits = 0.1,
        nodes near coords={\pgfmathprintnumber[1000
        sep={\,}]\pgfplotspointmeta},
      ]
      \addplot+ [
        Red,
        only marks,
        mark=*,
      ]
      table [
        y expr = \thisrow{W memory (kB)} / 1000,
        col sep = comma,
      ]
      {qbfMemory.csv}
      ;
      \addplot+ [
        Blue,
        only marks,
        mark=triangle*,
      ]
      table [
        y expr = \thisrow{WAlt memory (kB)} / 1000,
        col sep = comma,
      ]
      {qbfMemory.csv}
      ;
      \addplot+ [
        Indigo,
        only marks,
        mark=square*,
      ]
      table [
        y expr = \thisrow{Explicit memory (kB)} / 1000,
        col sep = comma,
      ]
      {qbfMemory.csv}
      ;
    \end{axis}
  \end{tikzpicture}
  \caption{Memory taken by \(\W\), \(\WAlt\), and the DFS algorithm on
    randomly generated QBF instances, with seven variables. For each number of
    clauses, we generate 50 instances, and we compute the average,
    ignoring the runs that timed out. All executions of the DFS
    algorithm on those instances timed
  out.}\label{fig:experiments:solve:qbf:memory}
\end{figure}

\paragraph{Arenas obtained from quantified Boolean formulas.} Our
tool \textsf{ParaGraphs} also handles quantified Boolean formulas
(QBF) and checks their satisfiability by reduction to solving a
concurrent parameterized reachability game, as in the \PSPACE-hardness
proof~\cite[Prop.\ 13]{BertrandBM19}.

Interestingly, given the constraints in arenas generated from QBF
instances, the constructing the lattice \(\latticeK\) (whose size only
depends on the number of variables) can be made very efficient: a
clever order on insertions  always avoids costly equality checks
(with elements already in the lattice). For formulas with 6 (resp.\ 7)
variables, the lattice, whose size is 8192 (resp.\ 32768), is obtained
in about 575ms (resp.\ 9s). Observe that our tailored construction
permits us to build larger lattices. We exercised the implementation
of the three algorithms on bigger arenas than the previous ones by
randomly generating 50 formulas with \(n\) variables and \(m\)
clauses, for every \(n \in \{4,5,6,7\}\) and
\(m \in \{50,60,\dotsc,100\}\). We focus here on the instances with 7
variables and defer to~\cite{BertrandBS25} the other instances.

The answers to {\bf Q1} and {\bf Q2} on benchmarks
generated from QBF formulas are similar to the ones on synthetic
arenas.  First, both \(\W\) and \(\WAlt\) manage to finish all
instances, while the DFS algorithm systematically reached
the time limit.  Second, \Cref{fig:experiments:solve:qbf} gives the
time taken by the \(\W\) and \(\WAlt\) algorithms on the formulas with
7 variables and with 60 to 100 clauses. For clarity, the exact values
are displayed next to each point. We observe that \(\WAlt\) is the
most efficient algorithm, by a margin of three orders of
magnitude. The memory consumption is given in
\Cref{fig:experiments:solve:qbf:memory}. Similarly to the synthetic
arenas, \(\WAlt\) needs less memory than \(\W\), and the algorithm
from~\cite{BertrandBM19} it the most efficient memory-wise. However,
we do not measure the memory consumption of the instances that timed
out.

\paragraph{Conclusion.}
We can give the following answers to our research
questions:
\begin{description}
  \item [A1]
    \(\WAlt\) is more efficient than \(\W\), both time- and memory-wise.
  \item [A2]
    There are arenas on which \(\WAlt\) outspeeds the DFS algorithm.
    The memory consumption of the latter is always lower than those
    of the former.
    However, our experiments are not sufficient to deduce syntactic
    characteristics
    of arenas on which \(\WAlt\) outperforms the DFS algorithm.
    Moreover, a potential optimization that would allow a more direct
    comparison between the DFS algorithm and \(\WAlt\) would be to
    stop computing the winning region as soon as \((v_0,
    \natnonneg)\) is encountered.
\end{description}

\section{Conclusion}\label{sec:conclusion}

This paper proposes a symbolic approach to compute the winning region
in parameterized concurrent reachability games. It is based on
antichains to represent sets of knowledges in an effective way. After
providing a fixed-point characterization of the winning region, we
gave two effective algorithms to compute the least fixed-point, one
that iterates over the elements of a lattice (which is finite when the
arena is finite), and one that constructs the knowledge sets on the
fly.  We implemented both methods, as well as the DFS algorithm
from~\cite{BertrandBM19}, in an open-source tool called
\textsf{ParaGraphs}, and performed benchmarks on multiple examples,
including arenas obtained from quantified Boolean formulas.

\bibliographystyle{eptcs} \bibliography{abbreviations,references}

\end{document}